\documentclass[a4paper,12pt]{article}

\usepackage{amsmath,amsfonts,amssymb,amsthm,amsmath,mathrsfs,stmaryrd}

\RequirePackage[dvipsnames]{xcolor}
\definecolor{arXiv}{named}{Blue} 
\definecolor{ColorCite}{named}{BrickRed}
\definecolor{ColorLink}{named}{NavyBlue}
\definecolor{ColorURL}{named}{RoyalBlue}

\usepackage{graphicx} 
\usepackage[colorlinks=false,
citecolor=ColorCite,
linkcolor=ColorLink,
urlcolor=ColorURL]{hyperref}
\usepackage{tikz,tikz-cd}
\usepackage{cite}
\usepackage{rotating}
\usepackage{url}
\usepackage{subcaption}
\usepackage{svg}
\usepackage{fullpage} 
\usepackage{wrapfig}
\usepackage{pdfpages}

\theoremstyle{definition}
\newtheorem{Example}{Example}[section]
\theoremstyle{definition}
\newtheorem{Definition}[Example]{Definition}
\theoremstyle{definition}
\newtheorem{Remark}[Example]{Remark}
\theoremstyle{definition}

\theoremstyle{definition}
\newtheorem{Theorem}[Example]{Theorem}
\theoremstyle{definition}
\newtheorem{Proposition}[Example]{Proposition}
\theoremstyle{definition}

\title{A geometrical invitation to BMS group theory}

\author{Xavier Bekaert$^a$, Yannick Herfray$^a$, Lea Mele$^b$, Noémie Parrini$^b$}

\date{$^a$ Institut Denis Poisson, Université de Tours, Université d’Orléans,\\ CNRS, IDP, UMR 7013, Parc de Grandmont,\\
37200 Tours, France\\\vspace{2mm}
$^b$ Physique de l’Univers, Champs et Gravitation,\\ Université de Mons – UMONS,
Place du Parc 20,\\ 7000 Mons, Belgium\\\vspace{2mm}
\texttt{\small xavier.bekaert@univ-tours.fr, yannick.herfray@univ-tours.fr, lea.mele@umons.ac.be, noemie.parrini@umons.ac.be}}

\begin{document}

\maketitle

\thispagestyle{empty} 

\abstract{In these lecture notes, a group-theoretical introduction to BMS symmetries is provided in a self-contained manner. More precisely, all definitions and structures are purely based on geometrical and group-theoretical notions defined at null infinity and valid in any dimension, in a way that circumvents its traditional bulk realisation as asymptotic symmetries. The topics which are reviewed are: the definition of BMS transformations as conformal Carrollian isometries of null infinity, the semidirect structure of the BMS group, the holographic reconstruction of Minkowski spacetime in terms of good cuts, the one-to-one correspondence between good cut subspaces and Poincar\'{e} subgroups (aka vacua), as well as a basic introduction to unitary representations of the BMS group.}

\pagebreak 

\tableofcontents

\thispagestyle{empty} 

\pagebreak

\setcounter{page}{1}

\section{Introduction}

The Bondi-Metzner-Sachs (BMS) group was initially introduced in \cite{bondi_gravitational_1962,Sachs:1962zza} as the group of \emph{asymptotic symmetries} of asymptotically flat spacetimes. To their surprise these asymptotic symmetries turned out not to be the expected Poincaré group but, rather, an infinite-dimensional extension of this group. 

The modern significance of the BMS group is that, as was shown 
by Strominger and collaborators \cite{Strominger:2013jfa,He:2014laa}, it is
a symmetry of the (perturbative) gravity S-matrix (see the review \cite{Strominger:2017zoo}). In this sense the BMS group appears to be of a more fundamental nature than the Poincaré group, at least in the presence of gravitational interactions. Since our mere notion of particles in quantum field theory (QFT) relies on their realisation as unitary irreducible representations (UIRs) of the Poincaré group, these results suggest that BMS particles\cite{Barnich:2014kra,Barnich:2015uva,Oblak:2016eij,Bekaert:2024uuy,Bekaert:2025kjb}, i.e. UIRs of the BMS group, might in fact improve
the notion of asymptotic states in situations where gravity must be taken into account. This possibility --- that BMS particles might be more fundamental than Poincaré particles --- was in fact already contemplated very early on by McCarthy as a motivation for his seminal classification of UIRs of the BMS group \cite{Mccarthy:1972ry,McCarthy_72-I,McCarthy_73-II,McCarthy_73-III,McCarthy:1974aw,McCarthy_75,McCarthy_76-IV,McCarthy_78,McCarthy_78errata}.\\

These lecture notes intend to provide a self-contained geometrical introduction to the BMS group, the focus being group-theoretical aspects. In particular the present exposition will culminate in a basic introduction to the classification of UIRs of the BMS group and we hope that it will serve as an invitation to the subject.

Another emphasis of this review is the relationship between BMS symmetries and null geometries. As soon as Penrose introduced the notion of asymptotically simple spacetimes and associated conformal compactifications \cite{Penrose:1962ij}, the BMS group was realised to be intimately tied to the intrinsic geometry of null infinity, see e.g. \cite{Geroch1977}. This fact was related to the Carroll (i.e. $c\to0$) limit in \cite{Duval:2014uva} and, since then, there has been a continuous resurgence of interest in Carrollian geometry and related Carrollian field theories, see e.g. \cite{Figueroa-OFarrill:2022nui,Bergshoeff:2022eog,Bagchi:2025vri,Nguyen:2025zhg,Ciambelli:2025unn,Ruzziconi:2026bix} for some reviews.

Throughout these notes, we will also systematically present definitions and results valid in any dimension. Some brief remarks might be in order since some theorems have been put forward as no-go results against the existence of supertranslations or memory effects in higher dimensions \cite{Hollands:2016oma,Garfinkle:2017fre}. Of course, the weakness of any no-go theorem is the strength of its assumptions: A more positive way to phrase these results is that, in higher dimensions, supertranslations and memory effects are respectively over-/sub-leading in the $1/r$ expansion, with respect to the radiative order where gravitational waves appear. By contrast, in four dimensions, all these orders coincide. More precisely, memory effects do exist in any \textit{even} dimension \cite{Kapec:2015vwa,Mao:2017wvx,Pate:2017fgt,Satishchandran:2019pyc,Campoleoni:2020ejn} while supertranslations, and the BMS group, exist in any dimension. 
\\

The plan of these notes is as follows. Section \ref{BMSCarroll} reviews the relationships between the BMS group and the geometry of null infinity. In order to do so, a step-by-step approach to (conformal) Carrollian geometry is provided, with various useful notions being introduced along the way and illustrated with basic examples.
Section \ref{semiprodstrBMS} discusses the semiproduct structure of the BMS group and the geometrical significance of some physically relevant subgroups, in particular Poincaré subgroups and the related notion of gravity vacua.
The title of Section \ref{Alicebdyland} is ``Alice in Boundaryland'' and its subtitle could have been:
``How to reconstruct Minkowski spacetime purely from data at null infinity?''. On the one hand, this section reviews the notion of good cuts from the bulk perspective and their correspondence with points of Minkowski spacetime. On the other hand, it also presents the intrinsic realisation of good cuts as paraboloids of revolution in conformally flat Carrollian spacetime. Finally, Section \ref{BMSUIRs} gives a glimpse of the classification of UIRs of the BMS group. We particularly emphasise the hard/soft decomposition of supermomenta, with some results in higher dimensions presented here for the first time. A short appendix concludes this paper with various technical proofs. 

\section{BMS group and Carrollian geometries}\label{BMSCarroll}

To prepare the discussion of Carrollian geometries, we review various concepts of principal $\mathbb{R}$-bundles in Section \ref{principRbundes} and we illustrate them in the case of null infinity. Then, we review the main definitions of Carrollian spacetimes in Section \ref{Carrolliangeometry}. Finally, we conclude in Section \ref{ConformalCarrolliangeometry} with their conformal generalisation relevant for BMS symmetries.

\subsection{Principal bundle geometry}\label{principRbundes}

Let us consider a complete nowhere vanishing vector field $n = n^\mu \partial_\mu \neq 0$ on a manifold $\mathscr{M}$ of dimension $d+1$. Taking the integral lines of  $n$, a congruence of parametrised open curves is obtained. 
In the language of fiber bundles, this vector field $n$ can, under mild assumptions\footnote{Namely, one needs to assume that the corresponding $\mathbb{R}$ action on ${\mathscr{M}}$ is proper in order to ensure that $\bar{\mathscr{M}}$ is a manifold.}, be thought of as the fundamental vector field of a principal bundle   
\begin{equation}\label{prbdle}
\pi\, :\, \mathscr{M} \twoheadrightarrow \bar{\mathscr{M}}
\end{equation}
with one-dimensional fibres being either diffeomorphic to $U(1)$ or $\mathbb{R}$.
We will focus on the latter case.
Consequently, the non-compact group $\mathbb{R}$ acts on $\mathscr{M}$ via shifts of the affine parameter $u$. Therefore, the integral lines of $n$ are the orbits of the $\mathbb{R}$-action.\footnote{We recall that, roughly speaking, a principal bundle has a typical fibre which can be identified (via a diffeomorphism) to its structure group via the group action.}
The space $\bar{\mathscr{M}}$ of orbits is the base space of the principal bundle \eqref{prbdle} and can be seen as the quotient of the total space $\mathscr{M}$ by $\mathbb{R}$:
\begin{equation}
    \bar{\mathscr{M}} = \mathscr{M}/ \,\mathbb{R}\,. 
\end{equation}

Without loss of generality, such a principal $\mathbb R$-bundle $\mathscr{M}$ can be assumed to be trivial, i.e. it is isomorphic to $\mathbb{R}\times\bar{\mathscr{M}} $ (see e.g. Proposition 16.14.5 of \cite{Dieudonne1970}).
The situation outlined here is represented in Figure \ref{figure1}.
\begin{figure}[h]
\centering
\captionsetup{width=0.9\textwidth}
\includegraphics[width=2in]{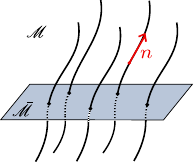}
\caption{Principal $\mathbb R$-bundle $\mathscr{M}$ over $\bar{\mathscr{M}}$ with its fundamental vector field $n$}
\label{figure1}
\end{figure} 

A coordinate system on $\mathscr{M}$ that will find many applications in the following of these notes is $(u, x^i)$, 
where $u \in \mathbb{R}$ denotes an affine coordinate
on the fibers and $x^i \in \mathbb{R}^{d}$ are the coordinates on the base space ($i=1,\ldots,d$). In these coordinates, the expression of the fundamental vector field is simply $n = \partial/ \partial u$. The integral curves are characterised  by constant base coordinates $x^i = x_0^i$ and are parametrised by $u$. The $\mathbb{R}$-action on $\mathscr{M}$ is defined by shifting $u$ by a constant number. The fibration \eqref{prbdle} is locally defined as
\begin{equation}
    \pi\, : \,  (u, x^i)   \longmapsto x^i.
\end{equation}
The coordinates $(u, x^i)$ will be called \textit{adapted coordinates}.

\begin{Example}
Let us consider the conformal compactification $\overline{\mathbb{R}^{d+1,1}}$ of Minkowski spacetime (see e.g. the section 2 of \cite{Eastwood:1991} for a review in any dimension). In suitable coordinates, the Minkowski metric $\eta$ on $\mathbb{R}^{d+1,1}$ can be rewritten as $\eta = \Omega^{-2} g$ with
\begin{align}
    g &= -\frac{4dUdV}{\sin^{2}(V-U)} + \gamma\,,& \Omega = \frac{2\cos(U) \cos(V)}{\sin(V-U)}.
\end{align}
Here $U,V\in [-\frac{\pi}{2},\frac{\pi}{2} ]$ with $V\geqslant U$ are doubly null coordinates, while $x$ and $\gamma$ are, respectively, coordinates and metric on the unit round sphere $S^d$. Future null infinity $\mathscr{I}_{d+1}^+$, represented in Figure \ref{fig: compactification}, then corresponds to the points for which $V=\frac{\pi}{2}$ and $U\in (-\frac{\pi}{2},\frac{\pi}{2} )$. The fundamental vector field along null infinity, $n := g^{-1} \nabla \Omega \big|_{\mathscr{I}^+}$, reads
\begin{equation}
  n =  \tfrac{1}{2}\cos^2(U) \partial_U.
\end{equation}
 It is a null vector field whose integral lines are the null rays generating the past null cone of $i^+$ ($V=U=\pi/2$). 
The fibration $\pi: (U,x^i) \mapsto x^i$ projects a point of future null infinity $\mathscr{I}^+_{d+1}$ onto the corresponding point on the celestial sphere $S^d$, see Figure \ref{fig: section}. Introducing the retarded time $u =2 \tan U$, we obtain an adapted coordinate system $(u,x^i)$, i.e. $n= \partial/\partial u$. 
\end{Example}

\begin{Example}
 More generally, the conformal boundary of an asymptotically flat spacetime \cite{Penrose:1962ij} is either an $\mathbb{R}$-principal bundle or a $U(1)$-principle bundle: the latter case, which happens for Taub-NUT solutions and in any situation where the so-called ``magnetic gravitational charge'' is non zero \cite{Ramaswamy1981}, is usually considered to a be a pathology, since this implies the existence of closed causal curves, and we will exclude this possibility. 
\end{Example}

\begin{Definition}\label{defcuts}
    The sections $s:S^d\hookrightarrow\mathscr{I}_{d+1}$ of the ray  bundle $\pi:\mathscr{I}_{d+1}\twoheadrightarrow S^d$, i.e. the maps $s$ from the sphere $S^d$ into (either future or past) null infinity $\mathscr{I}_{d+1}$ which are such that $\pi\circ s=id_{S^d}$, are called \textit{cuts}. They can be identified with the transverse spheres $S^d$ inside null infinity $\mathscr{I}_{d+1}$. \
\end{Definition}

\begin{figure}
\centering

\includegraphics[width=2in]{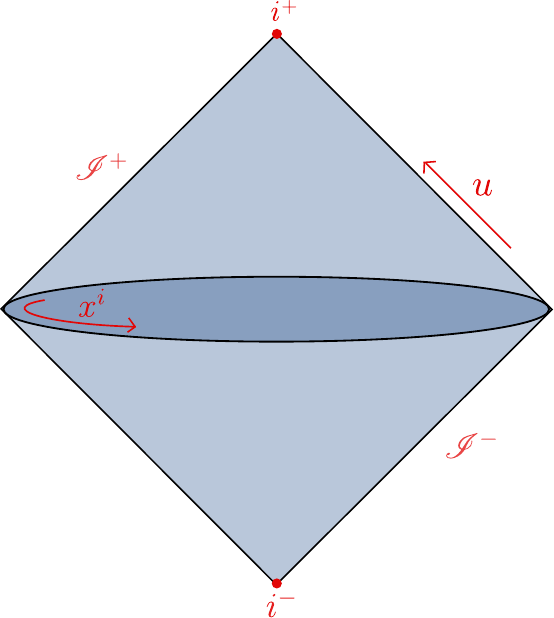}
\captionsetup{width=0.8\textwidth}
\caption{Compactified Minkowski spacetime $\overline{\mathbb{R}^{d+1,1}}$ with its conformal boundary $\mathscr{I}^\pm_{d+1}$}

\label{fig: compactification}
\end{figure}

\noindent An example of cut is illustrated in Figure \ref{fig: section}.
Some other concepts will be useful later on.

\begin{figure}
\centering
\includegraphics[width=2.5in]{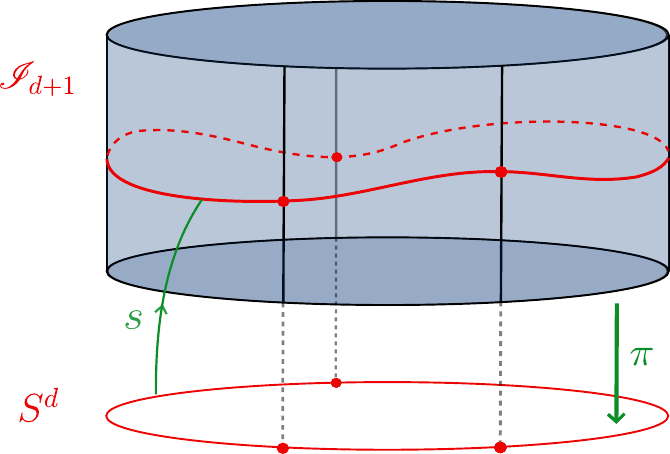}
\caption{A cut $s$ of $\mathscr{I}_{d+1}$}
\label{fig: section}
\end{figure}

\subsubsection*{Invariant lift of a function}

Consider a principal $\mathbb{R}$-bundle $\pi : \mathscr{M} \twoheadrightarrow \bar{\mathscr{M}}$ with fundamental vector field $n$. An \textit{invariant function} is a function  $f:\mathscr{M}\rightarrow\mathbb{R}$ on the total space $\mathscr{M}$ which is constant along the orbits, i.e. $\mathcal{L}_n f = 0$. The set of such functions will be denoted $C^\infty_\text{inv}(\mathscr{M})$. In the adapted coordinates $(u, x^i)$, the invariance property simply means that the function does not depend on the vertical coordinate $u$. For any invariant function $f:\mathscr{M}\rightarrow\mathbb{R}$ on the total space, one can thus see that there exists a function $\bar{f}:\bar{\mathscr{M}}\to\mathbb{R}$ on the base space such that
\begin{equation}
    f = \pi^* \bar{f} = \bar{f} \circ \pi \,.
\end{equation}
From these definitions it is then clear that there exists an isomorphism between the algebras of invariant functions $f$ on the total space $\mathscr{M}$ and of functions $\bar f$ on the base space $\bar{\mathscr{M}}$
\begin{equation}
    C^\infty_\text{inv}(\mathscr{M}) \cong C^\infty (\bar{\mathscr{M}}).
\end{equation}

\subsubsection*{Projection on the base space}

On a principal $\mathbb{R}$-bundle $\pi: \mathscr{M} \rightarrow \bar{\mathscr{M}}$ with fundamental vector field $n$, two types of vector fields can be introduced:
\begin{itemize}
\item \textit{Projectable vector field}: $X \in \mathfrak{X}(\mathscr{M})$ such that $\mathcal{L}_{n} X=f\cdot n$, where $f \in C^{\infty}(\mathscr{M})$,
\item  \textit{Invariant vector field}: $X \in \mathfrak{X}(\mathscr{M})$ such that $\mathcal{L}_{n} X=0$.
\end{itemize}

\begin{figure}[h]
\centering
\captionsetup{width=0.8\textwidth}
\includegraphics[width=2in]{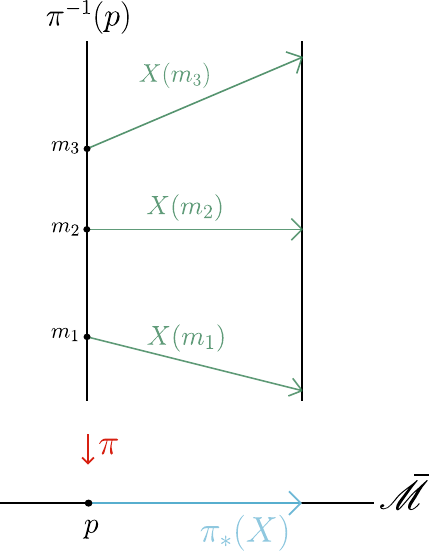}
\caption{
Values of a projectable vector field $X$ on $\mathscr{M}$ at various points $m_i\in\pi^{-1}(p)$ on the fibre above a base point $p\in\bar{\mathscr{M}}$}
\label{figure3}
\end{figure}

\begin{Remark}
Projectable vector fields are infinitesimal automorphisms of the fibre bundle. In fact, in the adapted coordinate system $(u, x^i)$ introduced above they are of the form 
    \begin{equation*}
        X= F(u,x)\frac{\partial}{\partial u} +G^i(x)\frac{\partial}{\partial x^i}\,,
    \end{equation*}
    so they generate the coordinate transformations
    \begin{equation*}
        u' = u + \epsilon \ F(u,x), \quad x' = x + \epsilon \ G(x)\, ,
    \end{equation*}
where $\epsilon$ is an infinitesimally small parameter, $F\in C^\infty(\mathscr{M})$ and $G^i\in C_{\text{inv}}^\infty(\mathscr{M})$,

Intuitively, a projectable vector field is a vector field on the total space $\mathscr{M}$ which is such that there is no horizontal inconsistency when projected onto the base space $\bar{\mathscr{M}}$, cf. Figure \ref{figure3}. The only loss of information upon projection is the vertical part along $n$.
\end{Remark}

\begin{Remark}
 Since invariant vector fields commute with the fundamental vector field, they generate flows preserving the $\mathbb R$-action on the fibre. In other words,  invariant vector fields are locally of the form:
 \begin{equation*}
        X\,=\, f(x)\,\frac{\partial}{\partial u} \,+\,g^i(x)\,\frac{\partial}{\partial x^i}\,,
    \end{equation*}
 and they generate automorphisms of the principal $\mathbb{R}$-bundle:
\begin{equation}
    u' = u + \epsilon \ f(x), \quad x' = x + \epsilon \ g(x),
\end{equation}
where $f$ and $g^i \in C^\infty(\bar{\mathscr{M}})$. Note that any invariant vector field is projectable.
\end{Remark}

\begin{Example}\label{supertr}
To anticipate the use that will be made of these vector fields, let us consider an application of invariant vertical vector fields, $X = \mathcal{T} \cdot n$ with $\mathcal{T}\in C_\text{inv}^\infty(\mathscr{I}_{d+1})$, in the BMS context. In this case, these particular vector fields generate the \textit{supertranslations}\footnote{We here use the term supertranslation interchangeably with vertical automorphism. There is a slight subtlety in how these interact with base diffeomorphisms in the case of the BMS group at null infinity. This is due to the fact that the vertical vector field $n$ actually transforms under Weyl transformations of the Carrollian metric as a density of conformal weight minus $1$, see \eqref{Weak conformal Carrollian structure}. Accordingly the function $\mathcal{T}$ above ought to transform as a conformal density of weight $1$ (see Equation \eqref{BMSb}) and the complete BMS action will be given by \eqref{BMS group action on coordinates}.}, i.e. the vertical automorphisms of the ray bundle $\mathscr{I}_{d+1}\simeq \mathbb{R}\times S^d$:
\begin{equation}
    u' = u + \mathcal{T}(x), \quad x' = x.
\end{equation}
These transformations map a section of the principal bundle to another section, i.e. a cut to another one. There are as many supertranslations as functions $\bar{\mathcal{T}} \in C^\infty(S^d)$. The situation is outlined in Figure \ref{fig:supertranslation maps cut to cut}, where a cut $s$ is mapped to another cut $\hat{s}$ by a supertranslation. In fact, any two cuts are related by a unique supertranslation, so that there is an isomorphism between the affine space $\Gamma(\mathscr{I}_{d+1})$ of all cuts and the vector space $C^\infty(S^d)$ of supertranslations. 
\end{Example}

\begin{figure}[h]
\centering
\captionsetup{width=0.8\textwidth}
\includegraphics[width=2.5in]{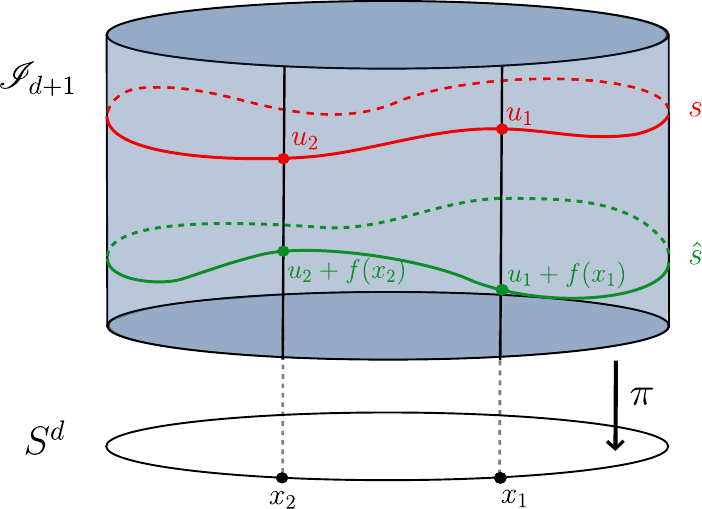}
\caption{The cuts $s$ and $\hat{s}$ are related by a supertranslation.}
\label{fig:supertranslation maps cut to cut}
\end{figure}

\subsection{Carrollian geometry}\label{Carrolliangeometry}

It is well known that the Galilean group can be obtained as a non relativistic limit, more precisely an İnönü-Wigner contraction, of the Lorentz group. There is however another natural possible kinematical group \cite{Bacry:1968zf,Figueroa-OFarrill:2017tcy} which can be obtained from Lorentz by an İnönü-Wigner contraction: this is the Carroll group  \cite{Levy-Leblond:1965dsc,SenGupta:1966qer}, named by Levy-Leblond in reference to Lewis Carroll  for the seemingly craziness of the causality relationship in the corresponding Carroll spacetimes (see Figure \ref{Redqueen}).\\

We will here review some basic elements of Carrollian geometries, having in mind to highlight their close relationship with the BMS group and its Poincaré subgroups. For original references on Carrollian geometries, see \cite{Duval:2014uoa,Bekaert:2015xua,Morand:2018tke} for careful comparison of Carrollian versus Galilean geometries and 
\cite{Figueroa-OFarrill:2018ilb,Figueroa-OFarrill:2022nui} for the corresponding homogeneous models (see also \cite{Hartong:2015xda,Figueroa-OFarrill:2022mcy,Campoleoni:2022ebj} for the curved Cartan geometry of Carroll gravity). For original results on conformal Carrollian geometries and their relationship to null infinity, see \cite{Penrose:1962ij,Geroch1977,Ashtekar:1987tt,Ashtekar:2014zsa} (see also \cite{Herfray:2021qmp} for a description on how these relate to Carrollian homogeneous models).

\begin{figure}
\centering
\captionsetup{width=0.8\textwidth}
    \includegraphics[width=0.4\textwidth]{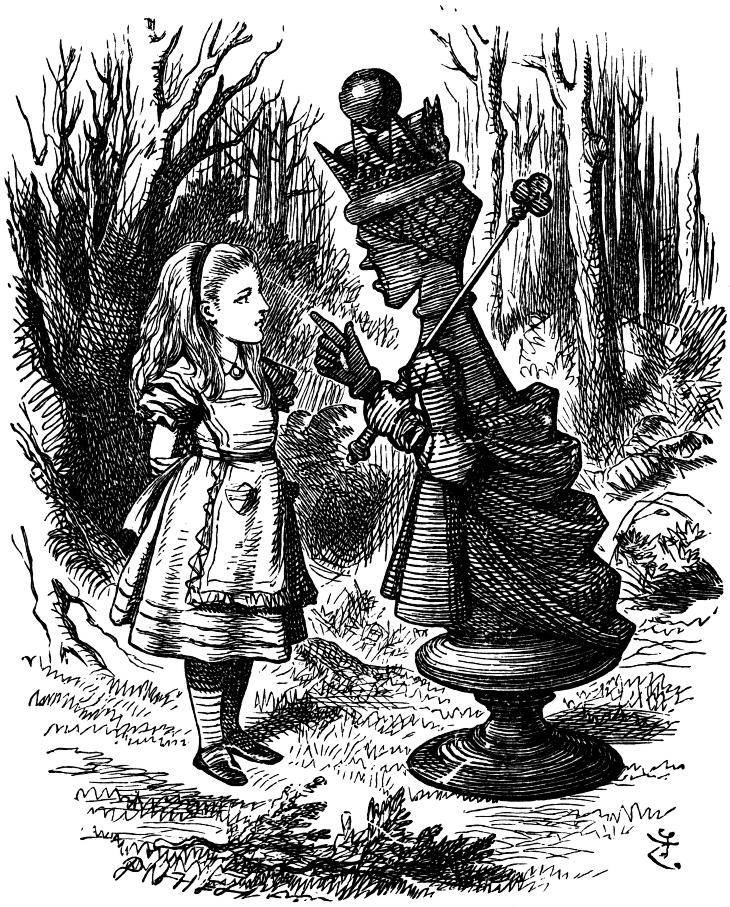}
    \caption{\label{Redqueen}The red queen lecturing Alice: ``You may call it `nonsense' if you like, but \emph{I}'ve heard nonsense, compared with which that would be as sensible as a dictionary!'' (Illustration by John Tenniel \cite{Alice2})}
\end{figure}

\subsubsection{Temporal structure of Carrollian geometry}

In the flat Carroll spacetime, \textit{all inertial observers are at rest}. In particular, this means that there is no possibility to exchange information, except via Carrollian tachyons, between these observers. This property is another motivation (according to Freeman Dyson \cite{Dyson:1972sd}) for the nickname ``Carroll'' and is captured by the Red Queen of the chessboard in the novel ``Through the Looking-Glass'' \cite{Alice2}:
\begin{center}
\textit{``Now, here, you see, it takes all the running you can do,\\ to keep in the same place.''}
\end{center}
Accordingly, in Carrollian geometry the 
wordlines of observers only belong to a fixed congruence of curves in spacetime. These curves can be thought of as the vertical orbits of a principal $\mathbb{R}$-bundle. More precisely, the timelike structure of a Carrollian spacetime $\mathscr{M}$ is defined as follows.

\begin{Definition} 
A \emph{field of observer} is a nowhere vanishing vector field $n = n^\mu \partial_\mu$ on the spacetime $\mathscr{M}$.
\end{Definition}

As discussed in the previous section (and under the same assumptions), the timelike structure endows the spacetime manifold $\mathscr{M}$  with a structure of principal $\mathbb{R}$-bundle over the spatial manifold $\bar{\mathscr{M}}$, with fundamental vector field $n$. Furthermore, it provides a distinction between various types of vectors in Carrollian geometry:
\begin{align}
\left\{
\begin{array} { lll } 
{ V ^ { \mu } = f n ^ { \mu } } &\text { with } \quad &\left\{\begin{array}{ll}
f \neq 0 & \textit{ Timelike}, \\
f>0 & \textit{Future-oriented}, \\
\end{array}\right.
\\[1.5em]
{ V ^ { \mu } \neq f n ^ { \mu } }& \hfill &\textit{ Spacelike},
\end{array}
\right.
\end{align} 
with $f \in C^\infty(\mathscr{M})$. The integral lines of the field of observers $n$ are called \textit{Carrollian worldlines}. The fibration is nothing but the congruence of the Carroll worldlines of inertial observers. In a Carrollian spacetime, all observers are at rest and are causally disconnected since all light-cones close in the $c\to 0$ limit.

\subsubsection{Carrollian relativity}

Any affine parameter $u$ of the field of observers $n$ (i.e. $n = \partial/ \partial u$) is called a \textit{Carrollian time}. One should note that it is defined up to shifts $u'=u+f(x)$ corresponding to the freedom to set the departure of the clock for every observer.

The image of a section $\iota:\bar{\mathscr{M}}\hookrightarrow\mathscr{M}$ of the fibration $\pi : \mathscr{M} \twoheadrightarrow \bar{\mathscr{M}}$ is a spacelike submanifold $\iota(\bar{\mathscr{M}})\subset\mathscr{M}$ of the spacetime manifold. These spacelike submanifolds can be thought of as simultaneity slices for a given choice of Carrollian time. The vertical automorphisms of the principal $\mathbb R$-bundle take the form $u'=u+f(x)$. They map simultaneity slices to each other.

The \textit{Carroll relativity principle} could be summarised as follows: simultaneity is relative (leading to a democracy principle: all simultaneity slices are equal) but rest is absolute.
This contrasts with its dual limit, the Galilean ($c\to\infty$) limit, where the \textit{Galilei relativity principle} states that: motion is relative (leading to the democracy principle: all inertial observers are equal) but simultaneity is absolute.

\begin{Example}\label{example simultaneity slices}
At null infinity $\mathscr{I}_{d+1}$, the simultaneity slices described above coincide with its cuts (cf. Definition \ref{defcuts}). Since supertranslations are the vertical automorphisms of the associated principal $\mathbb R$-bundle (cf. Example \ref{supertr}), they map one cut to another, and thus one simultaneity slice to another.
\end{Example}

\subsubsection{Spatial structure of Carrollian geometry}

The spatial world imagined by Carroll on the other side of the looking-glass is a chessboard, which one could think as providing a collection of rulers for its inhabitants, i.e. a metric along the spacelike directions.

\begin{Definition} \label{Carrmetr}
A \emph{Carrollian metric} is a positive semi-definite metric $\gamma$ on the spacetime $\mathscr{M}$ whose kernel is spanned by the fundamental vector field
\begin{equation}
\left\{\begin{array}{l}
\gamma_{\mu \nu} V^\mu V^\nu \geqslant 0 \\
\gamma_{\mu \nu} V^\mu=0 \quad \Leftrightarrow \quad V^\mu=f n^\mu.
\end{array}\right.
\end{equation}	
\end{Definition}

\begin{Remark}
There is a one-to-one correspondence\footnote{This one-to-one correspondence holds precisely because the Carrollian metric is both invariant ($\mathcal{L}_n\gamma_{\mu\nu}=0$) and horizontal ($\gamma_{\mu \nu}n^\nu=0$), in other words it is \textit{basic}.} between \textit{invariant} Carrollian metrics $\gamma_{\mu\nu}$ metrics on $\mathscr{M}$ (i.e. $\mathcal{L}_n\gamma_{\mu\nu}=0$) and Riemannian metrics $\bar{\gamma}_{ij}$ on the base $\mathscr{\bar{M}}$. In particular, $\gamma = \pi^*\bar{\gamma}$. 
\end{Remark}

\subsubsection{Weak Carrollian structure}

The time and spatial structures of a Carrollian spacetime have been defined, which can be summarised as follows (see \cite{Henneaux:1979vn} for the early version with different terminology).

\begin{Definition}
[Henneaux, 1979] A \emph{Carrollian structure} is composed of two data: a field of observers $n$ and a Carrollian metric $\gamma$.
\end{Definition}
\noindent 
In the following of these notes, we will focus on \emph{invariant} Carrollian structures (i.e. Carrollian structures for which the Carrollian metric is invariant).   

\begin{Example}\label{flatCarroll}
An example of these structures is given by the \textit{flat Carroll spacetime}, $\mathbb{R}\times \mathbb{R}^{d}$. In this case, the field of observersis given by $n = \partial/ \partial u$ and the flat Carrollian metric is equal to the pullback of the metric on the Euclidean space
 \begin{equation}\label{Flat carroll metric}
     ds^2_{\mathbb{R}\times\mathbb{R}^{d}}=\delta_{ij}dx^idx^j = d\ell^2_{\mathbb{R}^{d}}\,.
 \end{equation}
\end{Example} 

\begin{Example}\label{ScriCarroll}
Another example of Carrollian structure is given by future null infinity, $\mathscr{I}^+_{d+1}\simeq\mathbb{R}\times S^{d}$. In this case, the Carrollian time coincides with the retarded time $u$ and the Carrollian metric is equal to the pullback 
\begin{equation}\label{Scricarroll metric}
     ds^2_{\mathscr{I}_{d+1}}=\bar\gamma_{ij}dy^idy^j = d\ell^2_{S^{d}}
 \end{equation}
 of the round metric on the celestial sphere $S^d$. Alternatively\footnote{This really is an equivalent description because null infinity $\mathscr{I}_{d+1}$ is a conformal Carrollian geometry (see Section \ref{ConformalCarrolliangeometry}). One essentially does not loose any information by working in this patch as long as one make sure that all constructions are $SO(d+1,1)$-invariant.}, one can work in a patch ($\simeq\mathbb{R}\times\mathbb{R}^d$) covered by adapted stereographic coordinates $(u,x^i)$ together with the flat metric \eqref{Flat carroll metric}. This is this explicit description that we will very often consider in the following.
\end{Example}

\begin{Definition}
 A \emph{Carrollian isometry} is a diffeomorphism of $\mathscr{M}$ preserving the field of observers, $n^\prime = n$, and the Carrollian metric, $\gamma^\prime = \gamma$. 	
\end{Definition}
\noindent For an invariant Carrollian structure, these Carrollian isometries of $\mathscr{M}$ project onto isometries of the Riemannian metric on the base $\bar{\mathscr{M}}$.

\begin{Example}
A concrete and simple  example of Carrollian isometries are the vertical automorphisms of the principal $\mathbb{R}$-bundle,
\begin{equation}
    u^\prime = u + f(x), \quad x^\prime = x, \label{vertical automorphism}
\end{equation}
that we outlined in Section \ref{principRbundes}, and which are interpreted as supertranslations in the BMS context. One can convince oneself by checking that the generator $X = f(x) \frac{\partial}{\partial u}$ indeed preserves the fundamental vector field $n$ and the invariant metric $\gamma$.
\end{Example}

\subsubsection{Strong Carrollian structure}

In the relativistic case, thanks to the uniqueness of the Levi-Civita connection, parallel transport is completely determined by the metric structure of spacetime. In the Galilean case, since the cometric is degenerate, there is no unique affine connection\footnote{Note that the affine connections will always be implicitly assumed torsionless in these notes (see, however, Appendix A of \cite{Bekaert:2015xua} for a systematic study of torsionful strong Carrollian structures).} compatible with the Galilean structure. A stronger structure, including the gift of an affine connection, is therefore needed in order to introduce the gravitational field and to describe the motion of freefalling observers in a Galilean spacetime. For a Carrollian spacetime, since the metric is degenerate (cf. Definition \ref{Carrmetr}), one also needs to introduce a stronger structure in order to completely describe our Carrollian geometry.  

To define a general Carroll structure, we have seen earlier that one needs two objects: a field of observers $n$  and a Carrollian metric $\gamma$. The principal $\mathbb{R}$-bundle $\mathscr{M}$ where this structure lives can further be endowed with a \textit{compatible affine connection}. This defines a strong Carrollian structure, first discussed in this context in \cite{Duval:2014lpa}, which can be summarised as follows.

\begin{Definition}[Duval-Gibbons-Horvathy-Zhang, 2014] 
	 A \textit{strong Carrollian structure} is composed of three data: a field of observers $n$, a Carrollian metric $\gamma$ and an affine connection $\nabla$ which is compatible, in the sense that $\nabla n=0=\nabla\gamma$.
\end{Definition}

\begin{Example}\label{flatCarrspacetime}
The weak Carrollian structure of the flat Carrollian spacetime (see Example \ref{flatCarroll}) can be supplemented with a flat affine connection, which reads $\Gamma^\rho_{\mu \nu}=0$ in some adapted coordinates $x^\mu=(u,x^i)$.
\end{Example}

Two basic facts about strong Carrollian structures are:
\begin{itemize}
    \item If the Carrollian metric is not invariant ($\mathcal{L}_n\gamma \neq 0$), then there exists no associated strong structure.
    \item $\nabla$ and $\nabla'$ are two affine connections compatible with the same Carrollian structure $(n,\gamma)$ if and only if
\begin{equation}\label{Change of strong Carrollian geometry}
     (\Gamma^\rho_{\mu \nu})'=\Gamma^\rho_{\mu \nu}+ S_{\mu\nu} n^\rho
\end{equation}
for some symmetric horizontal tensor field, i.e. $S_{\mu\nu}=S_{\nu\mu}$ and $S_{\mu\nu}n^\nu=0$. In particular, if $(\Gamma^\rho_{\mu \nu})'=0$ is the flat connection in an adapted coordinate system $x^\mu=(u,x^i)$ then $S_{\mu\nu} = \nabla_{\mu} \nabla_{\nu} u$.
\end{itemize}   

\begin{Example}\label{Strong carroll at null infinity}
For any asymptotically simple spacetime, the Levi-Civita connection of the unphysical metric induces, at null infinity, a strong Carrollian structure. More precisely, the induced connection $\nabla$ is determined by \eqref{Change of strong Carrollian geometry} with $\nabla'$ the flat connection in adapted coordinates $(u,x^i)$ and $S$ the pullback along null infinity of the asymptotic shear  of the null geodesic congruence generated by the null vector field $\ell=g^{-1}\nabla u\big|_{\mathscr{I}}$. 
\end{Example}

\begin{Definition}
	A \emph{strong Carrollian isometry} is a diffeomorphism of $\mathscr{M}$ preserving the field of observers $n^\prime = n$, the Carrollian metric $\gamma^\prime = \gamma$ and the affine connection $\nabla^\prime = \nabla$.
\end{Definition}

\begin{Example} Consider the flat strong Carrollian structure on $\mathbb{R}\times\mathbb{R}^d$ (see Example \ref{flatCarrspacetime}).
	As is well-known, preserving the flat affine connection $\Gamma^\rho_{\mu\nu} = 0$ implies that the second derivatives of the transformations are zero. This last condition restricts the transformations to be affine in the coordinates, thereby leaving only the Carrollian time translations and Carrollian boosts
	\begin{equation}
    u^\prime = u + a + b_i\, x^i, \quad x^\prime = x,
\end{equation}
among the vertical isometries. The condition to preserve the flat affine connection removes the arbitrariness of the vertical automorphism given in Equation \eqref{vertical automorphism}. The remaining isometries arise from the base isometries: the spatial translations and rotations of $\mathbb{R}^d$.  
\end{Example}

The Carrollian isometries can also be described via their infinitesimal counterparts.

\begin{Definition}
	A (strong) \emph{Carroll-Killing vector field} is a vector field $X\in\mathfrak{X}(\mathscr{M})$ on the Carrollian spacetime such that
\begin{equation*}
{\cal L}_X n=0\,,\quad {\cal L}_X\gamma=0 \quad(\text{and}\quad {\cal L}_X\nabla=0\quad\text{in the strong case}).
\end{equation*}
\end{Definition}
Note that ${\cal L}_X n=0\Leftrightarrow{\cal L}_n X=0$, so a Carroll-Killing vector field is necessarily invariant.

\begin{Remark}
    In general, a strong Carrollian geometry will have no isometries. On the other hand, the isometry algebra of a strong Carrollian geometry has maximal dimension if and only if there exists a local coordinate system where the connection is flat, i.e. $\Gamma=0$. This can be proved for example by the method of equivalence of Cartan  \cite{Herfray:2021qmp}.
\end{Remark}

\begin{Example}
	The Lie algebra of strong Caroll-Killing vector fields on flat Carrollian spacetime is the finite-dimensional Carroll algebra $\mathfrak{carr}(d+1)$. The non-trivial commutation relations of the generators of the Carroll algebra are
\cite{Levy-Leblond:1965dsc,SenGupta:1966qer} 
\begin{subequations}\label{Carroll}
\begin{align}
[\,P_i\,,\,B_j\,]=\delta_{ij}\,H\,, \\[0pt]
[\,P_i\,,\,J_{jk}\,]=\delta_{i[j}\,P_{k]}\,,\\[0pt] 
[\,B_i\,,\,J_{jk}\,]=\delta_{i[j}\,B_{k]}\,,\\[0pt]
[\,J_{ij}\,,\,J_{kl}\,]=\delta_{[i[k}J_{l]j]}\,,
\end{align}
\end{subequations}
with $J_{ij}=x_{[i}\partial_{j]}$, $P_i=\partial_i$, $H=\partial_t$, $B_i=x_i\partial_t$, which are respectively the generators of spatial rotations, spatial translations, time translations, and Carrollian boosts. As one can see the Carroll algebra has the structure of a semidirect sum $\mathfrak{carr}(d+1)=\mathfrak{so}(d)\inplus\mathfrak{h}_d$, of the Heisenberg algebra $\mathfrak{h}_d$ generated by the generators $\{H, P_i, B_j\}$, on which the ideal $\mathfrak{so}(d)$, spanned by the generators $\{J_{ij}\}$, acts via rotations of the indices.
\end{Example}

\subsection{Conformal Carrollian geometry}\label{ConformalCarrolliangeometry}

The conformal analogue of a Carrollian structure has been introduced in \cite{Duval:2014lpa} (see \cite{Penrose:1962ij,Geroch1977} for its seminal version with different terminology). 

\subsubsection{Weak conformal Carrollian structure}

The conformal analogue of a Carrollian structure can be defined as follows.

\begin{Definition}
[Penrose, 1962; Geroch, 1977] A \emph{conformal Carrollian structure }is an equivalence class $[n, \gamma]$ of Carrollian structures, i.e. pairs $(n, \gamma)$, with respect to the
following equivalence relation: 
\begin{equation}\label{Weak conformal Carrollian structure}
    n \sim \Omega^{-1} \,n \text{ and } \gamma \sim \Omega^2 \,\gamma\,,
\end{equation}
for any positive function $\Omega >0$, called \textit{conformal factor}.
\end{Definition}
Now, one can introduce the corresponding symmetries preserving the latter conformal structure.	
\begin{Definition}\label{confCarriso}
A \emph{conformal Carrollian isometry} is a diffeomorphism of $\mathscr{M}$ such that
the conformal Carrollian structure $[n, \gamma]$ remains unchanged, i.e.
\begin{equation*}
    n^\prime = \Omega^{-1} n\  \text{ and } \gamma^\prime = \Omega^2 \gamma,
\end{equation*}
for a certain conformal factor $\Omega > 0$.
\end{Definition}

\begin{Remark}\label{invariantconformalfactor}
	For an invariant Carrollian metric ($\mathcal{L}_n\gamma=0$), these conformal Carrollian isometries of the Carrollian structure on ${\mathscr{M}}$ project onto conformal isometries of the Riemannian metric on the base $\bar{\mathscr{M}}$. In particular, the conformal factor is invariant, $\mathcal{L}_n \Omega = 0$ (see Appendix \ref{invproof} for a proof).  As a conclusion, the conformal Carrollian isometries of an invariant structure must be of the form
\begin{align}\label{BMS group action on coordinates}
    u' &= \Omega(x)\Big( u + \mathcal{T}(x)\Big), & x^{\prime\,i}=G^i(x)&.
\end{align}
where $G:x  \mapsto x'$ is a conformal isometry of the base with $\bar\gamma' = \bar\Omega^2 \bar\gamma$ and $\Omega=\pi^*\bar\Omega$.
\end{Remark}

\begin{Example}\label{BMStransfosconfCarr}
At null infinity, we have that \cite{Geroch1977,Duval:2014uva}
\begin{center}
     \textit{BMS transformations = Conformal Carrollian isometries},
\end{center}	
so we take the point of view here to define the BMS symmetries intrinsically on the boundary as conformal Carrollian isometries of null infinity endowed with its (universal \cite{Ashtekar:1987tt}) conformal Carrollian structure.
\end{Example}

\begin{Definition}\label{confCarrollKilling}
A \emph{conformal Carroll-Killing vector field} is a vector field $X\in\mathfrak{X}(\mathscr{M})$ such that
\begin{equation*}
{\cal L}_X n=-f\,n\,,\quad {\cal L}_X\gamma=2f\,\gamma\,,
\end{equation*}
with $f\in C^\infty(\mathscr{M})$. 
\end{Definition}

\begin{Remark}\label{infversionremark}
As one can see,  conformal Carroll-Killing vector fields on $\mathscr{M}$ are always projectable on $\bar{\mathscr{M}}$. Furthermore,
for an invariant Carrollian metric, one has $f\in C_{\text{inv}}^\infty(\mathscr{M})$, i.e. $\mathcal{L}_n f = 0$, and any conformal Carroll-Killing vector field $X$ on $\mathscr{M}$ projects onto a conformal Killing vector field $\bar{X}$ on $\bar{\mathscr{M}}$, i.e. ${\cal L}_{\bar{X}}\bar{\gamma}=2\bar{f}\,\bar{\gamma}\,$. This is the infinitesimal version of Remark \ref{invariantconformalfactor} (see also Appendix \ref{invproof} for an independent proof). 
\end{Remark}
In the adapted coordinates $(u,x^i)$, the general expression of a conformal Carroll-Killing vector field is 
\begin{equation}
        X = Y + \Big( \tfrac{1}{d} \bar\nabla_i Y^i\, u + \mathcal{T}(x)\Big) \frac{\partial}{\partial u},
\end{equation}
where $\bar\nabla$ is the Levi-Civita connection of the metric $\bar\gamma$, and the conformal Carroll-Killing vector field $Y = Y^i(x) \frac{\partial}{\partial x^i}$ on $\mathscr{M}$ defines a conformal Killing vector field $\bar Y = Y^i(x) \frac{\partial}{\partial x^i}$ on $\bar{\mathscr{M}}$. 

\begin{Example}
The Lie algebra of conformal Carroll-Killing vector fields on null infinity $\mathscr{I}_{d+1}$ is isomorphic to the BMS algebra $\mathfrak{bms}_{d+2}$ (cf. Example \ref{BMStransfosconfCarr}). Introducing the infinitesimal generators, $J(\bar Y) = Y^i(x)\,\partial_i +\tfrac{1}{d} \bar\nabla_i Y^i(x)\, u \,\partial_u$ of conformal isometries of the celestial sphere $S^d$, and $\mathcal{T}(x) \,\partial_u$ of supertranslations, we have the commutation relations
\begin{subequations}\label{BMS}
\begin{align}
[\,J(\bar Y_1)\,,\,J(\bar Y_2)\,]&= J\Big( \{\bar Y_1,\bar Y_2\}\Big)\,,  \label{BMSa}\\[0pt]
[\,J(\bar Y)\,,\,\mathcal{T}\,\partial_u\,]&= \Big( Y^i \partial_i \mathcal{T} -\tfrac{1}{d}\,\bar\nabla_i Y^i\,\mathcal{T} \Big) \partial_u \,,\label{BMSb}\\[0pt]
[\,\mathcal{T}_1\,\partial_u \,,\,\mathcal{T}_2\,\partial_u\,]&=0\,.
\end{align}
\end{subequations}
where the curly bracket on the right-hand side of \eqref{BMSa} stands for the Lie bracket of vector fields.
Note in particular that, from Equation \eqref{BMSb}, one can see that the function $\mathcal{T}\in C^\infty(S^d)$ parametrising supertranslations should really be thought of as a conformal primary field of scaling dimension $-1$ since it has a nontrivial transformation law under $\mathfrak{so}(d+1,1)$.
\end{Example}

\subsubsection{Strong conformal Carrollian structure}\label{Strong_conformal_Carrollian_structure}

The following conformal analogue of the notion of strong Carrollian structure was introduced in \cite{Ashtekar:1981sf} a long time ago, from a different perspective and under a different name (see also \cite{Herfray:2020rvq,Herfray:2021qmp,Herfray:2021xyp} for an investigation on how these geometries precisely interpolate between Carroll and classical constructions in conformal geometry). 

\begin{Definition}[Ashtekar, 1981]\label{strongconformalCarrollianstructure}
	 A \textit{strong conformal Carrollian structure} is an equivalence class $[n, \gamma ,\nabla]$ of strong Carrollian structures, with respect to the
following equivalence relation :
\begin{align*}
    n^\rho &\sim \Omega^{-1}n^\rho, \qquad \gamma_{\mu\nu} \sim \Omega^2 \gamma_{\mu\nu}\,, &\text{ for any } 0<\Omega \in C^\infty(\mathscr{M}),\\
    \Gamma^\rho_{\mu\nu} & \sim \Gamma^\rho_{\mu\nu} + 2\,\Omega^{-1}\nabla^{}_{(\mu}\Omega \,\delta_{\nu)}^\rho  + \Phi \gamma_{\mu\nu} n^\rho &\text{ for any } \Phi \in C^\infty(\mathscr{M})
\end{align*}where the round bracket around two indices stands for  their symmetrisation with weight one, i.e. $S_{(\mu\nu)}=S_{\mu\nu}$ for a symmetric tensor. 
\end{Definition}

\begin{Example}
As Ashtekar emphasised early on (see e.g. \cite{Ashtekar:1981sf,Ashtekar:1987tt,Ashtekar:2014zsa}), in four dimensions the gravitational radiation reaching null infinity is precisely encoded by such a choice of strong conformal Carroll geometry: we saw in Example \ref{Strong carroll at null infinity} that any choice of unphysical metric $\hat{g}$ for the asymptotically flat spacetime induces a strong Carrollian geometry $(n,\gamma,\nabla)$ along null infinity. However the unphysical metric is only unique up to a Weyl transformation $\hat{g} \mapsto \omega^2 \hat{g}$. This rescaling changes the induced strong Carrollian geometry but preserves the equivalence class $[n, \gamma ,\nabla]$.
\end{Example}

\begin{Definition}
	A \emph{strong conformal Carrollian isometry} is a diffeomorphism of $\mathscr{M}$ such that the strong conformal Carroll structure $[n,\gamma, \nabla]$ remains unchanged:
    \begin{align*}
    n^{\prime\rho} &=\Omega^{-1}n^\rho, \qquad \gamma_{\mu\nu}' =\Omega^2 \gamma_{\mu\nu}\,,\\
    \Gamma^{\rho\,\prime}_{\mu\nu} &= \Gamma^\rho_{\mu\nu} + 2\,\Omega^{-1}\partial_{(\mu}\Omega \,\delta_{\nu)}^\rho  + \Phi \gamma_{\mu\nu} n^\rho\,,
\end{align*}
for certain functions $\Phi \in C^\infty(\mathscr{M})$ and  $0<\Omega \in C^\infty(\mathscr{M})$.
\end{Definition}

\begin{Example} Consider the conformally flat case of a strong conformal Carrollian structure, which is such that one representative is the strong Carrollian structure of Example \ref{Strong carroll at null infinity}.	Preserving the flat affine connection $\Gamma^\rho_{\mu\nu} = 0$ up to a trace, $(\Gamma^\rho_{\mu\nu})' = \Phi \gamma_{\mu\nu}n^{\rho}$, implies that the trace free part of the second derivatives of the transformations is zero. This restricts the vertical transformations to be affine up to a trace in the adapted coordinates, thereby leaving us with
	\begin{equation}
    u^\prime = u + a + \vec b \cdot\vec x + c\, |\vec x|^2, \quad \vec x^\prime = \vec x,
\end{equation}
as vertical automorphisms, i.e. time translations, Carroll boosts and temporal special conformal transformations. 
These vertical conformal isometries should be combined with the conformal isometries $x\mapsto x'$ of the base. The weaker condition of preserving the equivalence class of connections, rather then the complete affine connection, adds one extra vertical symmetry with respect to strong Carrollian isometries: the group of vertical Carrollian isometries extends from $\mathbb{R}^{d+1}$ to $\mathbb{R}^{1,d+1}$.
\end{Example}

The strong conformal Carrollian isometries can also be described via their infinitesimal counterparts.

\begin{Definition}
	A strong \emph{conformal Carroll-Killing vector field} is a vector field $X\in\mathfrak{X}(\mathscr{M})$ on the Carrollian spacetime such that
\begin{equation*}
{\cal L}_X n=-f\,n\,,\quad {\cal L}_X\gamma=2f\,\gamma\,, \quad\text{and}\quad {\cal L}_X\nabla= df\odot \delta +g\ \gamma\otimes n,
\end{equation*}
with $f,g\in C^\infty(\mathscr{M})$.\footnote{We adopted again an index free notation for brevity: $\odot$ denotes the symmetric tensor product and $\otimes$ the tensor product. Their definition is along the lines of the right-hand side expressions in Definition \ref{strongconformalCarrollianstructure}.}
\end{Definition}

In complete parallel with usual (non Carrollian) conformal geometry, a strong conformal Carrollian geometry will have no isometries in general. The isometry algebra of a strong conformal Carrollian geometry will have maximal dimension if and only if the equivalence class locally contains a flat connection. What is more, such conformal flatness of Carrollian geometry is equivalent in $d+1>3$ dimensions (respectively, $d+1=3$ and $d+1=2$) to the vanishing of a Carrollian version of the Weyl (resp. Cotton\footnote{This Carrollian Cotton tensor is equivalent to Newman-Penrose coefficients $\Psi_4^0$, $\Psi_3^0$, Im$\Psi_2^0$ (see \cite{Ashtekar:1981sf,Ashtekar:1987tt,Ashtekar:2014zsa} for the original result phrased in another terminology and \cite{Campoleoni:2023fug} for the relationship of this tensor with the Carroll limit of the usual Cotton tensor of conformal geometry).} and Möbius curvature\footnote{This Carrollian Möbius curvature is equivalent \cite{Herfray:2020rvq,Herfray:2021xyp} to the evolution equation induced on mass and angular momentum aspect by three-dimensional Einstein equations.}) tensor. A neat proof can again be obtained by the method of equivalence of Cartan \cite{Herfray:2021qmp}.

\begin{Example}\label{Poincaralgebra}
The Lie algebra of strong conformal Carroll-Killing vector fields on flat Carrollian spacetime is the finite-dimensional conformal Carroll algebra, denoted by $\mathfrak{confcarr}(d+1)$
whose generators obey the following commutation relations
\begin{subequations}
\begin{align}
[D, H] &= -H, & [D, K_i] &= K_i, & [K_i, P_j] &= -2\delta_{ij} D - 2J_{ij}, \\
[D, P_i] &= -P_i, & [K, P_i] &= -2B_i, & [K_i, B_j] &= -\delta_{ij} K, \\
[D, K] &= K, & [K_i, H] &= -2B_i, & [K_i, J_{jk}] &= \delta_{i[j} K_{k]}.
\end{align}
\end{subequations}
Here the new generators are the dilatations $D$, and the Carrollian (temporal and spatial) special conformal transformations $K$ and $K_i$. 
The generators $P_i$, $K_j$, $J_{ij}$ and $D$ of conformal transformations of $S^d$ span a Lie algebra isomorphic to the Lorentz algebra $\mathfrak{so}(d+1,1)$, whereas the generators $H$, $K$ and $B_i$ span an Abelian Lie algebra transforming in the vector representation $\mathbb{R}^{d+1,1}$ under the Lorentz algebra.
In the Cartesian coordinates on the Euclidean space $\mathbb{R}^d$ (i.e. the celestial sphere $S^d$ deprived of a point) these translation generators read $H=\partial_u$, $\vec B=\vec x\,\partial_u$ $K=|\vec x|^2\partial_u$\,.  All together, the generators of the above conformal extension of the Carroll algebra span a Lie algebra isomorphic to the Poincar\'e algebra (see e.g. Appendix A of \cite{Bekaert:2024itn} for the precise correspondence): 
\begin{equation}
    \mathfrak{confcarr}(d+1)\,\cong\,\mathfrak{iso}(d+1,1)\,.
\end{equation} 
As a result
\begin{center}
     \textit{Poincaré transformations = Strong conformal Carrollian isometries},
\end{center}	
so the Poincaré group can be defined intrinsically on the boundary as the group of strong conformal Carrollian isometries of null infinity endowed with a flat structure.
\end{Example}

\begin{Example}\label{Example: gravity vacuum at null infinity}
More generally, the strong conformal Carrollian structure induced at the null infinity of Kerr spacetimes is flat and its isometries thus define a Poincaré subgroup,
\begin{equation}\label{ISO subset BMS at infinity}
        ISO(d+1,1) \subset BMS_{d+2}.
\end{equation} 
For Minkowski spacetime $\mathbb{R}^{d+1,1}$, these Carrollian isometries of its conformal boundary coincide with the restriction of the global isometries of its interior.
\end{Example}

If the strong conformal Carrollian structure at null infinity is curved, then there will be no strong conformal Carrollian isometries in general.

\begin{Example}
For radiative spacetimes of dimension four (i.e. $d=2$), the presence of gravitational radiation implies that the strong conformal Carrollian structure at null infinity is not flat \cite{Ashtekar:1981sf,Ashtekar:2014zsa}.
\end{Example}

\section{Semidirect product structure of BMS group}\label{semiprodstrBMS}

Minkowski spacetime is the affine\footnote{Let us recall that an \textit{affine space} $A$ modeled on a vector space $V$ is a homogeneous space for the regular (i.e. free and transitive) action of $V$ on $A$, where $V$ is seen as an additive group. In other words, for any pair $(a_1,a_2)$ of points $a_1,a_2\in A$ there is a unique vector $v_{12}\in V$ mapping $a_1$ to $a_2$. In colloquial terms, an affine space is a vector space without an origin.} space $\mathbb{M}_{d+2}\simeq\mathbb{R}^{d+1,1}$ equipped with the Minkowski metric with signature $(-,+,\ldots,+)$.
The group of isometries of Minkowski spacetime, i.e. the diffeomorphisms preserving the Minkowski metric, is the Poincar\'{e} group $ISO(d+1,1)$. This group acts transitively on $\mathbb{M}_{d+2}$, thereby making it a homogeneous space. 

\subsection{Canonical subgroups}

We review the canonical normal subgroups of Poincar\'e and BMS groups.

\subsubsection{Canonical subgroups of the Poincar\'e group}\label{canonicsubgPoinc}

The following result is well-known.
\begin{Proposition}\label{translationmaxnorm}
The translation group is the maximal normal subgroup
$\mathbb{R}^{d+1,1}$ of $ISO(d+1,1)$.
\end{Proposition}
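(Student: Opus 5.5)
The plan is to show two things: the translation subgroup $T\cong\mathbb{R}^{d+1,1}$ is normal in $ISO(d+1,1)=SO(d+1,1)\ltimes\mathbb{R}^{d+1,1}$, and every proper normal subgroup $N$ is contained in $T$. Here "maximal" means the largest proper normal subgroup; equivalently, the normal subgroups are exactly those of $T$ together with the whole group. I would work with the connected component (or the full Lorentz group, with minor adjustments). Throughout I write elements as pairs $(\Lambda,a)$ with product $(\Lambda,a)(\Lambda',a')=(\Lambda\Lambda',a+\Lambda a')$.

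\emph{Normality.} $T$ is the kernel of the projection homomorphism $(\Lambda,a)\mapsto\Lambda$ onto $SO(d+1,1)$, so it is normal. Conjugation acts on it by $(\Lambda,b)(1,a)(\Lambda,b)^{-1}=(1,\Lambda a)$.

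\emph{Containment.} Let $N$ be normal and suppose $N\not\subset T$. Then some $(\Lambda,a)\in N$ has $\Lambda\neq 1$. Conjugating this element by a translation $(1,b)$ and multiplying by its inverse gives the commutator $(1,(1-\Lambda)b)\in N$. Since $\Lambda\neq1$, we can choose $b$ with $v=(1-\Lambda)b\neq0$, so $N\cap T$ contains a nonzero vector $v$. The set $N\cap T$ is a subgroup of $T$ stable under all Lorentz transformations, so it contains the additive span of the orbit $SO(d+1,1)\,v$.

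This span is all of $\mathbb{R}^{d+1,1}$, because the vector representation is irreducible and the orbit of any nonzero vector spans the space. Concretely, sums and differences of boosted and rotated copies of $v$, whether $v$ is timelike, spacelike or null, already generate a basis. Since $T$ is a vector group, one also needs real multiples, not just integer combinations. These follow because the orbit of a null vector contains all its positive rescalings via boosts, and for timelike or spacelike $v$ one can take differences of nearby orbit points to reach null and arbitrary directions with arbitrary lengths. Hence $T\subset N$.

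Then $N/T$ is a normal subgroup of $ISO/T\cong SO(d+1,1)$ containing $\Lambda\neq1$. The last step uses the fact that the restricted Lorentz group $SO^+(d+1,1)$ is simple modulo its centre, which is trivial here since $d+2\geq3$ with odd or even dimension handled by $-1\notin SO^+$ in the relevant cases. From this, $N/T$ is the whole Lorentz group, so $N=ISO(d+1,1)$.

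The main obstacle is this last step. It relies on the simplicity of the Lorentz group, which I would quote as a standard fact rather than prove, and on carefully handling disconnected components and the centre, which requires a case discussion. The real-span argument in the previous paragraph is the second delicate point. The phrasing "maximal normal" is likely meant loosely, namely that $T$ is the largest proper normal subgroup (or the maximal abelian normal one). In the latter reading the proof simplifies: an abelian normal subgroup meeting a non-translation element would, by the commutator argument, contain $T$. It would then have to commute with $T$, forcing $\Lambda=1$, a contradiction.
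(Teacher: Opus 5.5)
Your argument is correct for the identity component $SO^+(d+1,1)$, but it takes a different route from the paper's. The paper proves an abstract lemma: if $N\triangleleft G$ and $G/N$ is simple, then for any proper normal $N'\supseteq N$ the correspondence theorem makes $N'/N$ a proper normal subgroup of $G/N$. That subgroup must be trivial, so $N'=N$. The lemma uses nothing about how the quotient acts on $N$, which is why it handles the proposition and Sachs' theorem uniformly.

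You instead use the specific action. The commutator $(1,(1-\Lambda)b)$ and the irreducibility of the vector representation show that a normal subgroup not inside $T$ already contains $T$, and only then do you invoke simplicity. This gives a strictly stronger conclusion: every proper normal subgroup lies in $T$, so $T$ is the \emph{unique} maximal normal subgroup. The paper's lemma gives only maximality under inclusion, which by itself does not justify its later claim that $T$ is canonical. Your route has two costs. You need the additive-span step, because abstract normal subgroups need not be closed or $\mathbb{R}$-linear. And the argument does not transfer verbatim to BMS, since $C^\infty(S^d)$ contains the invariant subspace $\mathbb{R}^{d+1,1}$ and is not irreducible.

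Three side remarks need correcting.

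First, the full $SO(d+1,1)$ is disconnected and not simple. $SO^+(d+1,1)\ltimes\mathbb{R}^{d+1,1}$ is then a proper normal subgroup strictly containing the translations, so the statement is false for the full group and no ``minor adjustment'' rescues it. The paper's appeal to simplicity of $SO(d+1,1)$ carries the same tacit restriction to the identity component.

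Second, $-1$ reverses time orientation, so it never lies in $SO^+(d+1,1)$. The centre is therefore trivial for every $d\geq 1$, and no case split is needed.

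Third, two points on the same timelike orbit always differ by a spacelike vector, since $(x-y)^2=-2m^2-2\,x\cdot y\geq 0$, so such differences are never null. Your span claim still holds, because every spacelike vector arises as such a difference and every vector is a sum of two spacelike ones.
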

\noindent A proof is given in Appendix \ref{Sachsproof} for completeness.

By this proposition, the translation group is a canonical subgroup of the Poincar\'{e} group.
Since it is normal, we can consider the quotient group
$\frac{ISO(d+1,1)}{\mathbb{R}^{d+1,1}}$, which is isomorphic to the Lorentz group $SO(d+1,1)$. Furthermore, we have the canonical projection
\begin{equation}
    \pi\,:\, ISO(d+1,1)  \twoheadrightarrow \frac{ISO(d+1,1)}{\mathbb{R}^{d+1,1}}\,:\, g \longmapsto g\,\mathbb{R}^{d+1,1}
\end{equation}
where $g\,\mathbb{R}^{d+1,1}$ denotes a left coset of the translation subgroup. The kernel of this surjection $\pi$ is precisely the normal subgroup $\mathbb{R}^{d+1,1}$.

We then have the following canonical sequence of Lie groups
\begin{align}
    \mathbb{R}^{d+1,1} \overset{\iota}{\hookrightarrow} ISO(d+1,1) \overset{\pi}{\twoheadrightarrow}  {SO(d+1,1)}
\end{align}
where $\iota$ is the canonical inclusion and
$\pi$ the canonical surjection.
Equivalently, we can say that the Poincar\'{e} group $ISO(d+1,1)$ is a group extension of the Lorentz group $SO(d+1,1)$ by the translation group $\mathbb{R}^{d+1,1} $.

\subsubsection{Canonical subgroups of the BMS group}\label{canonicsubgBMS}

The previous analysis can be applied to the BMS group in an analogous way.

\paragraph{Supertranslation subgroup of the BMS group.}
We have seen in Section \ref{ConformalCarrolliangeometry} that one can define $BMS_{d+2}$ as the group of conformal Carrollian isometries of $\mathscr{I}_{d+1} \simeq \mathbb{R} \times S^d$ equipped with its usual conformal Carrollian structure. 
The translation group $\mathbb{R}^{d+1,1}$ is a normal subgroup of $BMS_{d+2}$, but a larger group, the supertranslation group\footnote{As we already highlighted a couple of times, supertranslations are actually conformal primary fields on the sphere. This can directly be seen from the action of the Lorentz group \eqref{BMSb}.} $C^\infty (S^d)$
is also a canonical normal subgroup, which contains the translations.
In fact, there is a direct analogue of the above Proposition \ref{translationmaxnorm} for the BMS group : 
\begin{Theorem}[Sachs, 1962] \label{Sachs}
The supertranslation group is the maximal normal subgroup of the BMS group.
\end{Theorem}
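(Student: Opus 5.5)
We interpret the statement as the claim that every proper normal subgroup $N$ of $BMS_{d+2}$ is contained in the supertranslation subgroup $\mathcal{S}=C^\infty(S^d)$. The strategy copies the proof of Proposition~\ref{translationmaxnorm}. We use the semidirect structure $BMS_{d+2}\simeq SO(d+1,1)\ltimes C^\infty(S^d)$ read off from \eqref{BMS group action on coordinates}. We write elements as pairs $(\Lambda,\mathcal{T})$, with $\Lambda$ the conformal isometry $G$ of the celestial sphere and $\mathcal{T}$ the supertranslation.

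The first step checks that $\mathcal{S}$ is itself normal. It is the kernel of the projection $BMS_{d+2}\twoheadrightarrow SO(d+1,1)$, which sends a conformal Carrollian isometry to the conformal isometry of the base it induces (Remark~\ref{invariantconformalfactor}). Conjugation of a supertranslation by $(\Lambda,0)$ gives the weight-$(-1)$ action $\mathcal{T}\mapsto \bar\Omega\,(\mathcal{T}\circ G^{-1})$, as in \eqref{BMSb}.

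For maximality, let $N$ be a normal subgroup that is not contained in $\mathcal{S}$. We must show $N=BMS_{d+2}$.

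1. The image $\pi(N)$ is a nontrivial normal subgroup of the Lorentz group. Since $SO(d+1,1)$ (connected component, $d\ge1$) is simple modulo its centre, and the centre is trivial or $\pm1$, $\pi(N)$ is the whole group; the discrete-centre case must be handled separately.

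2. Pick $g=(\Lambda,\mathcal{T})\in N$ with $\Lambda\neq 1$. For any supertranslation $\alpha$, normality gives the commutator $[\alpha,g]=\alpha\,g\,\alpha^{-1}g^{-1}\in N$. This commutator is the pure supertranslation $\alpha-\Lambda\cdot\alpha$. So $N\cap\mathcal{S}$ contains the image of $(1-\Lambda)$ acting on $C^\infty(S^d)$.

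3. $N\cap\mathcal{S}$ is a Lorentz-invariant subspace of $C^\infty(S^d)$, since it is closed under conjugation by the whole group, and it is nonzero. It is also closed under addition, because it is a group. Combining all the images $(1-\Lambda')\alpha$ for $\Lambda'$ in the conjugacy class of $\Lambda$ (all of which lie in $\pi(N)$), we want to conclude $N\cap\mathcal{S}=\mathcal{S}$. The plan is first to reach the translations $\mathbb{R}^{d+1,1}$, which form the unique finite-dimensional invariant subspace, then the complementary pieces. The key is to show that every supertranslation is a finite sum of elements of the form $\alpha-\Lambda\cdot\alpha$. Take $\Lambda$ to be a boost or rotation, for which $1-\Lambda$ is nearly surjective, and use a partition of unity on $S^d$ with local supports, where a suitable dilation moves each support off itself. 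For $\beta$ supported in a small ball $B$ and a Möbius map $\Lambda$ with $\Lambda B\cap B=\emptyset$ moving $B$ far away, write $\beta=\alpha-\Lambda\alpha$ with $\alpha=\sum_{k\ge0}\Lambda^k\beta$. This is a locally finite sum if $\Lambda$ is a loxodromic map attracting to a point outside $B$, but smoothness at the attracting fixed point needs care.

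4. Once $\mathcal{S}\subset N$ and $\pi(N)=SO(d+1,1)$, we get $N=BMS_{d+2}$.

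The main obstacle is step 3, the convergence and smoothness of the telescoping sum near the fixed point. We would try replacing it by two applications with different loxodromics, or use the fact that each $\beta$ with small support is a commutator with a rotation of finite order $m$. In that case $\alpha=\tfrac1m\sum_k k\Lambda^k\beta$ works exactly whenever $\sum_k\Lambda^k\beta=0$, and that obstruction can itself be removed by a second, generic rotation.
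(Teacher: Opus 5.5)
\textbf{There is a genuine gap, in Step 3.} Your Steps 1, 2 and 4 are sound. However, you are proving a stronger statement than the paper does, and all the extra content sits in Step 3, which is not established.

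The paper reads ``maximal'' in the usual sense: no proper normal subgroup strictly contains $C^\infty(S^d)$. Its proof in Appendix \ref{Sachsproof} is just the correspondence theorem applied to the simple quotient $BMS_{d+2}/C^\infty(S^d)\cong SO(d+1,1)$. In your notation the argument runs as follows.
\begin{itemize}
\item Suppose $C^\infty(S^d)\subsetneq N\trianglelefteq BMS_{d+2}$.
\item Then $\pi(N)$ is a nontrivial normal subgroup of the Lorentz group, hence all of it by your Step 1. Your caveats about centre and components apply equally to the paper, which simply takes the quotient to be simple.
\item Since $N$ already contains the kernel, Step 4 gives $N=BMS_{d+2}$.
\end{itemize}
So for the paper's statement, Step 3 is not needed at all.

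Your reading, that every proper normal subgroup lies in $C^\infty(S^d)$, is genuinely harder. Put $W=\sum_\Lambda \mathrm{Im}(1-\Lambda)$. The subgroup $SO(d+1,1)_s\ltimes W$ is normal, so your claim is equivalent to $W=C^\infty(S^d)$. That is, every smooth supertranslation must be a \emph{finite} sum of terms $\alpha-\Lambda\cdot\alpha$. Invariance alone cannot deliver this. Proposition \ref{translationasinvariantsubmodule} should be read for closed subspaces, and $N\cap C^\infty(S^d)$ need not be closed. For example, the real-analytic supertranslations form a proper Lorentz-invariant subspace containing the translations.

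\textbf{The loxodromic telescoping fails structurally, not merely through a smoothness technicality.}
\begin{itemize}
\item Take a flat chart centred at a fixed point, where $\Lambda$ acts as $z\mapsto\lambda Rz$. A smooth weight-$(-1)$ function with $\alpha(\lambda Rz)=\lambda\,\alpha(z)$ must be linear, $\alpha=b\cdot z$. Only the degree-one Taylor term survives, and a flat remainder dies under iteration.
\item Let $\beta$ be a small bump away from the fixed points $p_\pm$. Any smooth solution of $\alpha-\Lambda\cdot\alpha=\beta$ is $\Lambda$-invariant near both fixed points.
\item Subtract the $\Lambda$-invariant translation that matches $\alpha$ near the attracting point $p_+$. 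What remains is $-\sum_{k\ge1}\Lambda^{-k}\beta$.
\item This remainder would have to be linear near $p_-$. It also vanishes on an open $\Lambda$-invariant set accumulating at $p_-$, so it would vanish near $p_-$. That is impossible for $\beta\neq0$.
\end{itemize}
Hence small bumps are never in $\mathrm{Im}(1-\Lambda)$ for a single loxodromic $\Lambda$.

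\textbf{The rotation fallback is also incomplete.} The identity $\mathrm{Im}(1-\Lambda)=\ker\sum_{k<m}\Lambda^k$ is correct. But whether the invariant remainder lies in $\mathrm{Im}(1-\Lambda)+\mathrm{Im}(1-\Lambda')$ is exactly the coinvariant problem for the group $\langle\Lambda,\Lambda'\rangle$. It fails whenever that group is finite. Otherwise it needs small-divisor or spectral-gap estimates uniform in the spherical-harmonic degree, and none of this appears in the proposal.

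Either drop Step 3 and prove the standard statement as above, or supply such an analytic argument.
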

\noindent Strictly speaking, this is a slight generalisation of Lemma 2 in \cite{Sachs:1962zza}. A proof can be found in Appendix \ref{Sachsproof}.

We can therefore consider the quotient group $\frac{BMS(d+1, 1)}{C^\infty(S^d)}$
and with this we have a canonical projection 
\begin{align}\label{canproj}
    \pi\, :\, BMS(d+1, 1) \twoheadrightarrow \frac{BMS(d+1, 1)}{C^\infty(S^d)}\,:\,    g \mapsto g\,\,C^\infty(S^d)
\end{align}
whose kernel is the supertranslation subgroup $C^\infty(S^d)$.
Following Remark \ref{invariantconformalfactor}, this map $\pi$ corresponds to the projection of conformal Carrollian isometries of null infinity $\mathscr{I}_{d+1}$ onto conformal isometries of the celestial sphere $S^d$. The kernel of this projection are the vertical automorphisms of the principal bundle $\mathscr{I}_{d+1}\simeq\mathbb{R}\times S^d$, i.e. the supertranslations (cf. Remark \ref{supertr}). Therefore, the quotient 
\begin{align}
	 \frac{BMS(d+1, 1)}{C^\infty(S^d)} \cong SO(d+1,1)
\end{align}
identifies with the conformal group of the celestial sphere $S^d$.

We then have the following canonical sequence 
\begin{align}
    C^\infty(S^d) \overset{\iota}{\hookrightarrow} BMS_{d+2} \overset{\pi}{\twoheadrightarrow}  {SO(d+1,1)}
\end{align}
where $\iota$ is the canonical inclusion (cf. Theorem \ref{Sachs}) and
$\pi$ the canonical surjection \eqref{canproj}.
Equivalently, we can say that $BMS_{d+2}$ is a group extension of the Lorentz group $SO(d+1,1)$ by the supertranslation group $C^\infty(S^d) $.

\paragraph{Translation subgroup of the supertranslation group.}
The following result is well-known in conformal geometry and in the representation theory of the Lorentz group.
\begin{Proposition}\label{translationasinvariantsubmodule}
The translation group $\mathbb{R}^{d+1, 1}$ is a canonical subgroup of the supertranslations group $C^\infty(S^d)$ characterised by the fact that it is the unique invariant subspace of the vector space $C^\infty(S^d)$ of supertranslations, under the action \eqref{BMSb} of the Lorentz group $SO(d+1,1)$.
\end{Proposition}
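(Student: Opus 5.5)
The plan is to work at the level of the Lie algebra action \eqref{BMSb}, in which $\mathcal{T}$ transforms as a conformal primary of weight $-1$ on $S^d$. Throughout, ``invariant subspace'' is understood as a closed, nonzero, proper $SO(d+1,1)$-invariant subspace of $C^\infty(S^d)$. We will show two things: that the translations form such a subspace, and that any such subspace coincides with it.

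\emph{Identification and invariance.} In the stereographic patch the translations are $\mathcal{T}=a+\vec b\cdot\vec x+c\,|\vec x|^2$, as in Example~\ref{Poincaralgebra}. On the round sphere this becomes $\mathcal{T}(\hat x)=q_\mu\,\ell^\mu(\hat x)$ with $\ell^\mu=(1,\hat x)\in\mathbb{R}^{d+1,1}$ null, i.e.\ the space $\mathcal{H}_0\oplus\mathcal{H}_1$, where $\mathcal{H}_\ell$ denotes the spherical harmonics of degree $\ell$. The vector $\ell^\mu(\hat x)$ transforms under a Lorentz transformation as a null vector rescaled by the conformal factor, $\ell^\mu\mapsto\Omega\,\Lambda^\mu{}_\nu\ell^\nu$. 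Hence $q_\mu\ell^\mu$ is mapped to $(\Lambda^{-1}q)_\mu\ell^\mu$, which is exactly the weight $-1$ law \eqref{BMSb}, and this gives invariance. Alternatively, one can check infinitesimally that $P_i,J_{ij},D,K_i$ preserve $\mathrm{span}\{1,x^i,|\vec x|^2\}$; for instance $K_i$ has $Y=2x_i\,x^j\partial_j-|\vec x|^2\partial_i$ and $\tfrac1d\bar\nabla\cdot Y=2x_i$.

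\emph{Uniqueness via a ladder argument.} Restrict first to the maximal compact subgroup $SO(d+1)$. Each $\mathcal{H}_\ell$ is irreducible and they are pairwise inequivalent. A closed $SO(d+1)$-invariant subspace $W$ is therefore a closure of a direct sum $\bigoplus_{\ell\in L}\mathcal{H}_\ell$: the $SO(d+1)$-equivariant projectors onto $\mathcal{H}_\ell$ are integrals over the group, so they preserve closed invariant subspaces. Next, take a boost generator $B$ along a fixed axis, acting by $B\mathcal{T}=Y^i\partial_i\mathcal{T}-\tfrac1d(\bar\nabla\cdot Y)\mathcal{T}$ with $Y=\bar\nabla\cos\theta$ and $\tfrac1d\bar\nabla\cdot Y=-\cos\theta$. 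Using the standard recursion for multiplication by $\cos\theta$ and for $\partial_\theta$ on zonal harmonics, one gets
\[
B:\ \mathcal{H}_\ell\to\mathcal{H}_{\ell+1}\oplus\mathcal{H}_{\ell-1}.
\]
The raising coefficient is proportional to $(\ell+\Delta)$ and the lowering coefficient to $(\ell+d-1-\Delta)$, with a convention in which the weight is $\Delta=-1$ (fixing this convention is part of the calculation). Thus the raising part vanishes exactly at $\ell=1$, which reflects the invariance of $\mathcal{H}_0\oplus\mathcal{H}_1$. The lowering part never vanishes for $\ell\ge1$. Together with the $SO(d+1)$-irreducibility of each $\mathcal{H}_\ell$, this shows that $W\ni$ a nonzero component in $\mathcal{H}_\ell$ forces $W\supset\mathcal{H}_{\ell-1}$ for all $\ell\ge1$, and $W\supset\mathcal{H}_{\ell+1}$ for all $\ell\neq1$.

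\emph{Conclusion and main obstacle.} Any nonzero invariant $W$ contains some $\mathcal{H}_\ell$, and by lowering it contains $\mathcal{H}_0\oplus\mathcal{H}_1$. If $W$ contained some $\mathcal{H}_\ell$ with $\ell\ge2$, then raising and lowering would put every $\mathcal{H}_k$ in $W$, so $W$ would be dense and, being closed, equal to all of $C^\infty(S^d)$. Hence the only proper nonzero closed invariant subspace is $\mathcal{H}_0\oplus\mathcal{H}_1\cong\mathbb{R}^{d+1,1}$. The main obstacle is the explicit computation of the two ladder coefficients with the correct sign conventions for weight $-1$, in particular verifying that the raising coefficient vanishes only at $\ell=1$. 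The functional-analytic point about closedness needs care but is standard.
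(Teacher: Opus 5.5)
Your proof is correct, and it takes a genuinely different route from the paper. The paper does not argue uniqueness at all. It calls the proposition well known and, in Remark~\ref{translationambient}, realises supertranslations as degree-one homogeneous functions on the ambient null cone modulo $X^2G(X)$. Translations are then the linear functions $T_AX^A$, invariance is manifest because $SO(d+1,1)$ acts linearly on $X^A$, and translations are characterised as the kernel of the conformally invariant operator in \eqref{goodcuteqhom}. Your invariance step is essentially this ambient argument. Your $SO(d+1)$-type ladder, however, is an actual proof of the uniqueness part that the paper leaves to the literature.

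The coefficients you flag as the main obstacle come out as you predict. Extend $h\in\mathcal{H}_\ell$ to a harmonic homogeneous polynomial on $\mathbb{R}^{d+1}$ and use $\bar\nabla z\cdot\bar\nabla h=\partial_z h-\ell\,z\,h$ on the unit sphere, with $z=\cos\theta$. Then $B\,1=z$, and for $\ell\ge1$
\[
Bh=\partial_z h-(\ell-1)\,z\,h=-(\ell-1)\Big(z\,h-\tfrac{r^2\,\partial_z h}{2\ell+d-1}\Big)+\tfrac{\ell+d}{2\ell+d-1}\,\partial_z h\,.
\]
Here the bracket is harmonic of degree $\ell+1$ and $\partial_z h$ is harmonic of degree $\ell-1$. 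Both are nonzero on the zonal harmonic, which is all your irreducibility argument needs. For a general weight $\Delta$ the prefactors become $\ell+\Delta$ and $\ell+d-1-\Delta$.

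What your route buys is a self-contained argument that also proves the irreducibility of $C^\infty(S^d)/\mathbb{R}^{d+1,1}$ invoked in Proposition~\ref{UIRLorentz}. Run at $\Delta=d+1$, the lowering factor $\ell-2$ vanishes only at $\ell=2$. This directly gives the dual statement about soft supermomenta used in Appendix~\ref{ImGJMS}, without the annihilator lemma, which needs extra topological care in infinite dimensions. What the paper's route buys is manifest covariance and a local differential characterisation that is reused later for good cuts.

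Two of your precautions are in fact necessary:
\begin{itemize}
\item Without closedness the statement is false. The real-analytic functions on $S^d$ form a proper, dense, Lorentz-invariant subspace containing the translations, so closedness is implicit in the paper's formulation.
\item For $d=1$, the $SO(2)$-irreducibility of $\mathcal{H}_\ell$ holds only over $\mathbb{R}$. Over $\mathbb{C}$ the positive- and negative-frequency halves give additional invariant subspaces. This is harmless here, since supertranslations are real.
\end{itemize}
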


\begin{Remark}\label{translationambient}
In order to make this more concrete, let us consider the ambient formulation of the flat conformal sphere $S^d$ as a projective null cone in Minkowski spacetime $\mathbb{R}^{d+1, 1}$ (see e.g. the section 2 of \cite{Eastwood:1991} for a review in any dimension).
The primary fields of scaling dimension $-1$ on $S^d$ (i.e. the supertranslations) can be realised as equivalence classes of functions $F(X)$ on $\mathbb{R}^{d+1, 1}$ of homogeneity degree $1$ in the Cartesian coordinates $X^A$ ($A=0,1,\ldots,d+1$) modulo functions vanishing on the null cone, i.e. $F(X)\sim F(X)+X^2 G(X)$. This provides an ambient realisation of the space $C^\infty(S^d)$ of supertranslations where the action of the Lorentz group $SO(d+1,1)$ is linearly realised on the Cartesian coordinates. The invariant subspace $\mathbb{R}^{d+1, 1}\subset C^\infty(S^d)$ corresponds to the linear functions $F(X)=T_A X^A$ which are the only homogeneous polynomial functions of degree one.
Obviously, these linear functions can be seen as the smooth solutions of the following system of two ambient equations: $(X^A-1)F(X)=0$ and $\partial_A\partial_B F(X)=0$. The former equation implements the homogeneity condition while the latter equation picks the translations. In terms of the flat conformal sphere $S^d$, the homogeneity condition implements the scaling dimension $-1$ of supertranslations $\mathcal{T}$ while the second-order equation translates into the following conformally-invariant equation: 
\begin{equation}\label{goodcuteqhom}
    \big(\bar\nabla_{(i}\bar\nabla_{j)}-\tfrac1{d}\bar\gamma_{ij}\big)\mathcal{T}(x)=0\,.
\end{equation}
\end{Remark}

\begin{Example}\label{translationscoords}
In the Cartesian coordinates on the Euclidean plane $\mathbb{R}^d$ (i.e. the sphere $S^d$ deprived of a point), the finite translations obtained from the generators in Example \ref{Poincaralgebra} read as $u'=u+a+\vec b\cdot \vec x+c\,|\vec x|^2$ where $a,c\in\mathbb{R}$ and $\vec b\in\mathbb{R}^d$.
As one can see,  $\mathcal{T}(\vec x)=a+\vec b\cdot \vec x+c\,|\vec x|^2$ is indeed a solution of $ \big(\partial_{(i}\partial_{j)}-\tfrac1{d}\delta_{ij}\big)\mathcal{T}(\vec x)=0$, which is the flat version of \eqref{goodcuteqhom}.
\end{Example}

\subsection{Non-canonical subgroups}

As we have seen, the Poincar\'e  and BMS groups are group extensions of the Lorentz group but there is no canonical Lorentz subgroup. The Lorentz group $SO(d+1,1)$ acts on the celestial sphere $S^d$ as conformal transformations and it acts on supertranslations as primary fields on $S^d$ of scaling dimension $-1$. Accordingly, the BMS group $BMS_{d+2}$ has an abstract structure of semidirect product, as does the Poincar\'e group $ISO(d+1,1)$. Although they have an abstract structure of semidirect products, their decomposition as a concrete semidirect product depends on the choice of a Lorentz subgroup. These conceptual subtleties are actually of geometrical and physical significance. They are especially important for the BMS group.

\subsubsection{Non-canonical subgroups of the Poincar\'e  group}
\label{subsubsec non canonical subgroups of P}

The Minkowski spacetime $\mathbb{M}_{d+2}\simeq\mathbb{R}^{d+1,1}$ is an affine space which can be viewed as a homogeneous space for the Poincar\'{e} transformations, since they act transitively on it. This implies that, as a homogeneous space, Minkowski spacetime can be seen as a (left) coset space
\begin{equation}
    \mathbb{M}_{d+2} \simeq \frac{ISO(d+1,1)}{\text{Stab}(p)},
\end{equation}
where $\text{Stab}(p)\subset ISO(d+1,1)$ is the stabiliser of a point $p \in \mathbb{M}_{d+2}$ for the action of the Poincar\'{e} group $ISO(d+1,1)$ on Minkowski spacetime $\mathbb{M}_{d+2}$.

For each event $p$, it is easy to see that the stabiliser of this point is the subgroup of Lorentz transformations $SO(d+1,1)$ with $p$ as origin: 
\begin{equation}
    \text{Stab}(p) = SO(d+1,1)_p\,.
\end{equation}
Of course, all such Lorentz subgroup are isomorphic 
\begin{equation}
    SO(d+1,1)_p \cong SO(d+1,1)\,.
\end{equation} 
In fact, all such stabilisers are isomorphic to each other (more precisely they are
 conjugate to each other via translations). Therefore, we can write
\begin{equation}
     \mathbb{M}_{d+2}\simeq \frac{ISO(d+1,1)}{SO(d+1,1)}
\end{equation}
without referring to the point $p$.
The important thing to note is that because Minkowski spacetime $\mathbb{M}_{d+2}$ is homogeneous, there is no preferred way to choose a point $p$ rather than another in order to compute the stabiliser. Therefore, there is no canonical Lorentz subgroup inside the Poincar\'{e} group.
Instead, we have a one-to-one correspondence
\begin{equation}
  \text{Point }  p \in \mathbb{M}_{d+2}\quad \longleftrightarrow\quad \text{Subgroup } SO(d+1,1)_p \subset ISO(d+1,1) .
\end{equation}
If we look at the canonical sequence 
\begin{align}
    \mathbb{R}^{d+1,1} \overset{\iota}{\hookrightarrow} ISO(d+1,1) \overset{\pi}{\twoheadrightarrow}{SO(d+1,1)},
\end{align}
choosing a point $p \in \mathbb{M}_{d+2}$ allows one to write an injection
\begin{equation}
   h_p :  SO(d+1,1)_p \hookrightarrow ISO(d+1,1)\,.
\end{equation}

Recall that a group $G$ is a semidirect product $G=H\ltimes N$ of a subgroup $H\subseteq G$ acting on a normal subgroup $N\trianglelefteq G$ if every element $g\in G$ has a \emph{unique} decomposition as a product $g=hn$ with $h\in H$ and $n\in N$.

A choice of point $p \in\mathbb{M}_{d+2}$ in Minkowski spacetime is equivalent to a decomposition  of the Poincar\'e group as a semidirect product
$ISO(d+1,1) \simeq SO(d+1,1)_p \ltimes \mathbb{R}^{d+1,1}$ of a Lorentz subgroup $SO(d+1,1)_p$ acting on the translation subgroup $\mathbb{R}^{d+1,1} $. This example is just a warmup for the BMS group.

\subsubsection{Non-canonical subgroups of the BMS group}
\label{subsubsec non canonical subgroups of BMS}

Now we can look at the affine space of all cuts, i.e. $\Gamma(\mathscr{I}_{d+1})$.
We know from Example \ref{supertr} that this affine space 
is modeled on the vector space of all supertranslations $C^\infty(S^d)$. Furthermore, 
the group $BMS_{d+2}$  acts transitively on $\Gamma(\mathscr{I}_{d+1})$ via its action on null infinity.
This means that the affine space $\Gamma(\mathscr{I}_{d+1})$ of all cuts is a homogeneous space which can be 
viewed as a (left) coset space 
\begin{align}
	\Gamma(\mathscr{I}_{d+1}) \simeq \frac{BMS_{d+2}}{Stab(s)}
\end{align}
where $s:S^d\hookrightarrow\mathscr{I}_{d+1}$ is a cut. Let us spell out the corresponding geometric picture.

\paragraph{Lorentz subgroups of BMS group.} 
On $\mathscr{I}_{d+1} \simeq \mathbb{R} \times S^d$, for each ``transverse copy'' of the sphere $S^d$, 
there is a corresponding subgroup $SO(d+1,1)$ of conformal isometries for this sphere.
This can be made more precise: for each cut $s : S^d \hookrightarrow  \mathscr{I}_{d+1}$, there is a corresponding subgroup $SO(d+1,1)_s$
of conformal Carrolian isometries preserving the submanifold $s(S^d)\subset \mathscr{I}_{d+1}$. Therefore:
\begin{align}
	\Gamma(\mathscr{I}_{d+1}) 
	\simeq \frac{BMS_{d+2}}{SO(d+1,1)_s}.
\end{align}

There is no preferred section of the principal $\mathbb R$-bundle $\mathscr{I}_{d+1}\simeq \mathbb{R} \times S^d$. This implies that there is no canonical Lorentz subgroup inside the BMS group.
Instead we have a one-to-one correspondence 
\begin{equation}
    \text{Cut $s$ of }\mathscr{I}_{d+1}  \quad \longleftrightarrow \quad \text{Subgroup }SO(d+1,1)_s \subset BMS_{d+2}.
\end{equation}
A choice of cut $s \in\Gamma(\mathscr{I}_{d+1} )$ is also equivalent to a decomposition  of the BMS group as a semidirect product
$BMS_{d+2} \simeq SO(d+1,1)_s \ltimes C^\infty(S^d)$ of a Lorentz subgroup $SO(d+1,1)_s$ acting on the supertranslation subgroup $C^\infty(S^d)$.

Putting together all the relevant groups of Sections \ref{subsubsec non canonical subgroups of P} and \ref{subsubsec non canonical subgroups of BMS} gives the following diagram
\begin{center}
	\begin{tikzcd}
  C^\infty(S^d) \arrow[rr, hook] 
    & & BMS_{d+2}  \arrow[rd, two heads] &  \\
    & & & \ \ SO(d+1,1) \arrow[lu, blue, hook, "cut", shift left=3] 
    \arrow[ld, red, hook, "point"', shift right=3]\\
  \mathbb{R}^{d+1,1} \arrow[uu, hook]  \arrow[rr, hook] 
& &  ISO(d+1,1) \arrow[ru, two heads] & 
\end{tikzcd}
\end{center}
where the colored arrows are the non-canonical ones, since they depend on a choice (respectively of a cut or of a point).

\paragraph{Poincar\'e subgroups of BMS group.} 

Consider a Lorentz subgroup $SO(d+1,1)_s\subset BMS_{d+2}$ of the  group. Proposition \ref{translationasinvariantsubmodule} implies that the corresponding Poincar\'e group is also a subgroup of the BMS group.
\begin{align}
   \underbrace{SO(d+1,1)_s \ltimes \mathbb{R}^{d+1,1}}_{\cong ISO(d+1,1)}\;\subset\; \underbrace{SO(d+1,1)_s \ltimes C^\infty(S^d)}_{\cong BMS_{d+2}}\,.
\end{align}
Similarly to the Lorentz group, this Poincar\'e subgroup is not a canonical subgroup of the BMS group.
We will see now that Poincar\'e subgroups are in one-to-one correspondence with the space of gravitational vacua (see e.g. \cite{Ashtekar:1987tt,Ashtekar:2014zsa}). This is particularly transparent with the following group-theoretical definition.

\begin{Definition}[Ashtekar, 2014]\label{def grav vacua}
A \textit{gravitational vacuum} $\mathbb{M} \in \mathbb{V}$ is an equivalence class $[s]$ formed by all cuts related to a cut $s$ by a translation in $\mathbb{R}^{d+1, 1}\triangleleft BMS_{d+2}$. 
In other words, a gravitational vacuum is an orbit of a cut $s$ under the action of the translation subgroup $\mathbb{R}^{d+1, 1}$.
Therefore, the affine space $\mathbb{V}$ of all gravitational vacua is the homogeneous space
\begin{equation}\label{eq grav vacua}
	\mathbb{V} \simeq \frac{\Gamma(\mathscr{I}_{d+1})}{\mathbb{R}^{d+1,1}}\simeq \frac{BMS_{d+2}}{ISO(d+1,1)}. 
\end{equation}    
Accordingly, a gravitational vacuum is equivalent to a choice of Poincar\'e subgroup inside the BMS group.  
\end{Definition}

\begin{figure}
\centering
\captionsetup{width=0.8\textwidth}
\includegraphics[width=2in]{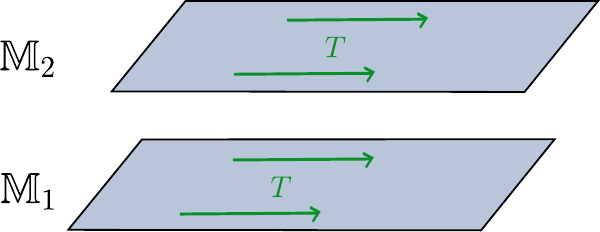}
\caption{The gravitational vacua $\mathbb{M}_i \in \mathbb{V}$ are the different orbits in the space of cuts under the translation group ($T\in\mathbb{R}^{d+1,1}$)}
\label{fig: orbits of transl}
\end{figure}

Choosing a particular Poincar\'e subgroup inside the
BMS group and quotienting by a Lorentz subgroup gives:
\begin{align}
	ISO(d+1,1) &\subset BMS_{d+2} \nonumber\\[0.3em]
     &\Downarrow \\[0.3em]
	\frac{ISO(d+1,1)}{SO(d+1,1)} &\subset 
	\frac{BMS_{d+2}}{SO(d+1,1)} \qquad\longleftrightarrow\qquad \mathbb{M}_{d+2} \subset \Gamma(\mathscr{I}_{d+1}).\nonumber
\end{align}
In fact, any orbit of the translation group inside the space $\Gamma(\mathscr{I}_{d+1})$ of all cuts can be identified with a Minkowski spacetime $\mathbb{M}_{d+2}$, i.e. with a particular gravitational vacuum, see Figure \ref{fig: orbits of transl}.

\begin{Remark}
In the light of Subsection \ref{Strong_conformal_Carrollian_structure}, one can see that a gravitational vacuum (cf. Definition \ref{def grav vacua}) is also equivalent to a choice of flat strong conformal Carrollian structure on null infinity $\mathscr{I}_{d+1}$ (this seminal observation first appeared in \cite{Ashtekar:1981sf,Ashtekar:1987tt}):
\begin{center}
     \textit{Gravitational vacua = Flat strong conformal Carrollian structure}.
\end{center}	
This is consistent with the fact that such (equivalence classes of) compatible connections indeed span an affine space (i.e. $\mathbb{V}$ in Definition \ref{def grav vacua}). 
\end{Remark}

\begin{Example}
The example \ref{Example: gravity vacuum at null infinity} provides us with examples of gravity vacua as preferred Poincaré groups selected by the asymptotic geometry of a Kerr spacetime. More precisely, any stationary asymptotically simple four-dimensional spacetime in fact selects a particular gravitational vacuum \cite{Newman:1966ub}.
\end{Example}

The examples above give concrete realisations of gravity vacua. On the other hand, our general considerations were purely kinematical and only based on group-theoretical considerations, without any mention of dynamical gravity, thus the term \textit{Poincar\'e vacuum} might be more appropriate than ``gravitational vacuum''. Note that this choice of terminology (``Poincar\'e vacuum'') could be motivated as follows: in relativistic QFT, a vacuum is the unique state annihilated by all Poincar\'e generators. In other words, it spans the trivial representation of the Poincar\'e group. Of course, this standard axiom of QFT implicitly assumes that one is given a Poincar\'{e} group.
Accordingly, in a BMS-invariant QFT one expects a one-to-one correspondence between (Poincar\'e-invariant) vacua and Poincar\'{e} subgroups of BMS group.

\begin{figure}
\centering
\captionsetup{width=0.8\textwidth}
\includegraphics[width=2in]{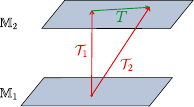}
\caption{$\mathcal{T}_1$ and $\mathcal{T}_2=\mathcal{T}_1+T$ are supertranslations corresponding to the same shift $[\mathcal{T}_1]=[\mathcal{T}_2]\in\Gamma(\mathscr{I}_{d+1})/\mathbb{R}^{d+1,1}$ of gravitational vacua, taking $\mathbb{M}_1$ to $\mathbb{M}_2$,  whereas $T\in\mathbb{R}^{d+1,1}$ is a translation}
\label{fig: shifts of grav vacua}
\end{figure}

The elements of the quotient group $\frac{C^\infty(S^d)}{\mathbb{R}^{d+1,1}}$ of the supertranslation group by its translation subgroup can be thought of as shifts of gravitational vacua, see Figure \ref{fig: shifts of grav vacua}. 
A summary of the relevant affine spaces (and the corresponding vector spaces they are modeled on) is provided in Table \ref{table2}.

\begin{sidewaystable}
		\begin{tabular}{|c|c|c||c|c|}
    \hline
    &&&&\\
			Affine space & Elements & Stabiliser & Vector space & Vectors
            \\&&&&\\\hline\hline
            &&&&\\
Minkowski spacetime & Event & Lorentz subgroup & Translation group & Translation\\
$\mathbb{M}_{d+2}$& $p\in \mathbb{M}_{d+2}$ & $SO(d+1,1)\subset ISO(d+1,1)$ & $\mathbb{R}^{d+1,1}$ & $T\in\mathbb{R}^{d+1,1}$
\\&&&&\\\hline
&&&&\\
Space of cuts & Cut & Lorentz subgroup & Supertranslation group & Supertranslation\\
$\Gamma(\mathscr{I}_{d+1})$ & $s:S^d\hookrightarrow \mathscr{I}_{d+1}$ & $SO(d+1,1)\subset BMS_{d+2}$ & $C^\infty(S^d)$ & $\mathcal{T}\in C^\infty(S^d)$
\\&&&&\\\hline
&&&&\\
Gravitational vacua space & Gravitational vacuum & Poincar\'e subgroup  & Vacua shift group & Shift of vacua\\
$\mathbb{V}$& $\mathscr{M}_{d+2}\simeq\mathbb{M}_{d+2}$ & $ISO(d+1,1)\subset BMS_{d+2}$ & $\frac{C^\infty(S^d)}{\mathbb{R}^{d+1,1}}$ & $[\mathcal{T}]$ with $\mathcal{T}\in C^\infty(S^d)$\\
&&&&\\\hline
		\end{tabular}
        \caption{Relevant affine spaces and the vector spaces they are modeled on.}
       \label{table2}
\end{sidewaystable}

\paragraph{Distance between vacua.}

Known facts about the unitary representations of $SO(d+1,1)$ (see e.g. \cite{Todorov:1978rf,Basile:2016aen,Sun:2021thf} for some reviews) imply that the above vector space of vacua shifts carries a UIR of the Lorentz group. This has important implication on the geometry of the infinite-dimensional space of gravitational vacua: it is endowed with a positive-definite metric.

\begin{Proposition}\label{UIRLorentz}
    The vector space $\frac{C^\infty(S^d)}{\mathbb{R}^{d+1,1}}$ of vacua shifts is a Hilbert space carrying a unitary irreducible representation of $SO(d+1,1)$. This implies that the affine space $\mathbb{V}$ of gravitational vacua is endowed with a Lorentz-invariant distance.
\end{Proposition}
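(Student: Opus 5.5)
We would prove Proposition \ref{UIRLorentz} by identifying $C^\infty(S^d)$ with a principal-series module of the Lorentz group and using the known structure of its reducibility at scaling dimension $-1$. By \eqref{BMSb}, supertranslations are conformal primaries of weight $\Delta=-1$ on $S^d$, i.e. the smooth vectors of the (non-unitary) principal series representation induced from the character $\Delta=-1$ of the parabolic subgroup of $SO(d+1,1)$. By Proposition \ref{translationasinvariantsubmodule}, this module has exactly one proper invariant subspace, the $(d+2)$-dimensional translations, which is the finite-dimensional vector representation. The quotient $C^\infty(S^d)/\mathbb{R}^{d+1,1}$ is therefore an irreducible $(\mathfrak{g},K)$-module; irreducibility follows from the uniqueness statement in Proposition \ref{translationasinvariantsubmodule}. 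It remains to exhibit a positive-definite invariant inner product on it, and then complete.

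To build the inner product we use the intertwining (Knapp--Stein) operator $I:C^\infty_{\Delta}\to C^\infty_{d-\Delta}$ with $d-\Delta=d+1$. The pairing $\int_{S^d}\mathcal{T}_1\,\mathcal{T}_2\,d\mathrm{vol}$ between weights $-1$ and $d+1$ is conformally invariant, so $\langle\mathcal{T}_1,\mathcal{T}_2\rangle:=\int_{S^d}\mathcal{T}_1\,(I\mathcal{T}_2)$ is an invariant Hermitian form. Its kernel is an invariant subspace. On the sphere, $I$ is diagonal in spherical harmonics $Y_\ell$, with eigenvalues
\[
\lambda_\ell\;\propto\;\frac{\Gamma(\ell+d+1)}{\Gamma(\ell-1)},
\]
suitably normalised via the Gamma function of $\ell+\Delta$ and $\ell+d-\Delta$. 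These eigenvalues vanish exactly for $\ell=0,1$, which are precisely the translations: constants and the $\ell=1$ harmonics, matching the solutions of \eqref{goodcuteqhom}. They have a fixed sign for all $\ell\ge 2$. After fixing the overall sign, the form therefore descends to a positive-definite invariant inner product on the quotient, namely $\|[\mathcal{T}]\|^2=\sum_{\ell\ge2}\lambda_\ell\,|\mathcal{T}_\ell|^2$. Completing gives a Hilbert space. The group acts by isometries on a dense subspace, so the action extends to a unitary representation. This representation is irreducible because its $K$-finite vectors form the irreducible $(\mathfrak{g},K)$-module found above; alternatively we can cite the classification of unitary duals in the references \cite{Todorov:1978rf,Basile:2016aen,Sun:2021thf}, where this quotient is the well-known discrete-series-like or exceptional component at integer $\Delta$.

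The step we expect to be the main obstacle is computing the spectrum $\lambda_\ell$ and checking its sign. In practice we would take the Funk--Hecke formula for the kernel $|x-y|^{-2(d+1)+d}=|x-y|^{-(d+2)}$, regularised by analytic continuation in $\Delta$. Concretely, we would compute $\lambda_\ell(\Delta)$ for generic $\Delta$, divide by the factor vanishing at $\Delta=-1$ for $\ell\le1$, and read off a uniform sign for $\ell\ge2$.

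Once the inner product is in hand, the distance claim is immediate. $\mathbb{V}$ is an affine space modelled on $C^\infty(S^d)/\mathbb{R}^{d+1,1}$, as in Definition \ref{def grav vacua}. We set $\mathrm{dist}(\mathbb{M}_1,\mathbb{M}_2):=\|[\mathcal{T}]\|$, where $[\mathcal{T}]$ is the unique shift taking $\mathbb{M}_1$ to $\mathbb{M}_2$. This is well defined and is a metric because the norm is positive definite. It is Lorentz-invariant because, under any $SO(d+1,1)_s$, the shift transforms by the unitary action, and choosing a different Lorentz subgroup changes it only by a translation, which is zero in the quotient.
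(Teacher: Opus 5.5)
Your argument is correct in outline, but it takes a different and more uniform route than the paper. The paper does not prove unitarity or irreducibility itself. It invokes the known unitary dual of $SO(d+1,1)$, and only for even $d$ does it write the invariant form explicitly, as $\int_{S^d}\mathcal{T}_1\,\hat P_{d+2}\mathcal{T}_2$ with the local GJMS operator; positivity on the quotient is then read off from the spectrum of $\bar\nabla^2$ (Appendix~\ref{unitproof}). You instead use the Knapp--Stein intertwiner $C^\infty_{-1}\to C^\infty_{d+1}$ for every $d$, and derive irreducibility from Proposition~\ref{translationasinvariantsubmodule} through the $K$-type decomposition. The two agree where they overlap: $\Gamma(\ell+d+1)/\Gamma(\ell-1)=\prod_{k=-1}^{d}(\ell+k)$ is exactly the product in Appendix~\ref{unitproof}, and for even $d$ your renormalised intertwiner is $\hat P_{d+2}$. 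The paper's version gives a local differential operator and needs no analytic continuation. Yours covers odd $d$, where the intertwiner is nonlocal and there is no GJMS formula, gives an actual irreducibility argument, and treats completeness honestly: the quotient of smooth functions is only pre-Hilbert, and its completion is $H^{(d+2)/2}(S^d)/\mathbb{R}^{d+1,1}$.

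Three points need repair.

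\emph{The kernel in your `main obstacle' step is wrong.} The intertwiner from weight $\Delta$ to $d-\Delta$ has kernel $|x-y|^{-2(d-\Delta)}$, which here is $|x-y|^{-2(d+1)}$. The kernel $|x-y|^{-(d+2)}$ belongs to $\Delta=\tfrac d2-1$ and gives eigenvalues proportional to $(\ell+\tfrac d2)(\ell+\tfrac d2-1)$; these do not vanish at $\ell=1$, so translations would not be the kernel. With the correct kernel, Funk--Hecke gives $\lambda_\ell\propto\frac{\Gamma(\Delta-\frac d2)}{\Gamma(d-\Delta)}\,\frac{\Gamma(\ell+d-\Delta)}{\Gamma(\ell+\Delta)}$. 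What you divide out is the $\ell$-independent factor $\Gamma(\Delta-\tfrac d2)$, which has a pole at $\Delta=-1$ exactly when $d$ is even; its residue is the GJMS operator. The zeros at $\ell=0,1$ come from $1/\Gamma(\ell+\Delta)$ and must be kept.

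You can avoid the singular integral altogether. Let $\hat x_j$ be the Cartesian coordinates of $S^d\subset\mathbb{R}^{d+1}$. By \eqref{BMSb}, the noncompact generators act on weight $\Delta$ as $\bar\nabla\hat x_j\cdot\bar\nabla-\Delta\,\hat x_j$. Requiring a $K$-diagonal $I$ to intertwine them gives $(\ell+\Delta)\lambda_{\ell+1}=(\ell+d-\Delta)\lambda_\ell$. At $\Delta=-1$ this forces $\lambda_0=\lambda_1=0$ and $\lambda_{\ell+1}/\lambda_\ell=(\ell+d+1)/(\ell-1)>0$ for $\ell\geq2$. The same computation gives ladder coefficients $-(\ell-1)$ upward and $(\ell+d)/(2\ell+d-1)$ downward. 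That also proves Proposition~\ref{translationasinvariantsubmodule} and the irreducibility you need for $d\geq2$.

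\emph{The case $d=1$.} Here the step from `unique invariant subspace' to `irreducible $(\mathfrak g,K)$-module' holds only over $\mathbb{R}$. After complexification the $K$-types are $e^{im\theta}$, and the quotient splits into the two discrete series $m\geq2$ and $m\leq-2$. So irreducibility there is that of a real orthogonal representation; this caveat applies equally to the paper's statement.

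\emph{The reason for Lorentz invariance of the distance.} The subgroups $SO(d+1,1)_s$ and $SO(d+1,1)_{s'}$ differ by conjugation by an arbitrary supertranslation, not by a translation. The correct reason is that $C^\infty(S^d)$ is abelian, so the action on shifts factors through $BMS_{d+2}/C^\infty(S^d)\cong SO(d+1,1)$. Supertranslations act on $\mathbb{V}$ by translations, so the distance is in fact $BMS_{d+2}$-invariant.
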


\begin{Remark}
The UIR of $SO(d+1,1)$ alluded to in Proposition \ref{UIRLorentz} belongs to the discrete series for $d=1$, to the boundary of the principal series for $d=2$, and to the exceptional series for $d>2$. In the conformal field theory terminology of \cite{Todorov:1978rf,Basile:2016aen,Sun:2021thf}, one might say that (in $d\geqslant 2$) the quotient space $\frac{C^\infty(S^d)}{\mathbb{R}^{d+1,1}}$ is spanned by the primary field $\partial_i\partial_j\mathcal{T}-\tfrac1{d}\delta_{ij}\partial^k\partial_k\mathcal{T}$ and all its descendants, where $\mathcal{T}$ is a supertranslation. The latter is a scalar primary field of scaling dimension $-1$, so the former primary field is labeled by the scaling dimension $+1$, with respect to the dilatation subgroup $SO(1,1)\subset SO(d+1,1)$, and by the rank-two symmetric representation of the rotation subgroup $SO(d)\subset SO(d+1,1)$.
\end{Remark}

For even $d$, the $SO(d+1,1)$-invariant scalar product on the space of supertranslations reads explicitly as
\begin{equation}\label{hermitianproduct}
    \langle\mathcal{T}_1,\mathcal{T}_2\rangle=\int_{S^d}\mathcal{T}_1(x)\,\hat{P}_{d+2}\mathcal{T}_2(x)\,\sqrt{\det\bar\gamma(x)}\,d^dx\,,
\end{equation}
where $\hat{P}_{d+2}$ is the GJMS operator\footnote{Note that our choice of terminology (``GJMS operator'') is a slight departure from the standard one in conformal geometry literature, because the operator $\hat{P}_{d+2}$ does \textit{not} admit a Weyl-invariant generalisation (this fact is a particular case of \cite{Gover:2003am}). Nevertheless, $\hat{P}_{d+2}$ is an $SO(d+1,1)$-invariant completion of the operator $(\bar\nabla^2)^{\frac{d}2+1}$ on $S^d$.} on $S^d$ \cite{GJMS,Branson}. The latter is a conformally-invariant differential operator of order $d+2$ mapping primary fields of scaling dimension $-1$ to primary fields of scaling dimension $d+1$. Explicitly, it reads as follows \cite{Branson}
\begin{equation}\label{GJMSd+2}
    \hat{P}_{d+2}
:=(-1)^{\frac{d}2+1}
\prod_{j=0}^{\frac{d}2}\Big(\bar\nabla^2-(\tfrac{d}{2}+j)(\tfrac{d}{2}-j-1)\Big)\,.
\end{equation}
The scalar product \eqref{hermitianproduct} is positive semi-definite on the vector space $C^\infty(S^d)$ of supertranslations, but it is positive-definite on the quotient space $\frac{C^\infty(S^d)}{\mathbb{R}^{d+1,1}}$ (see Appendix \ref{unitproof} for a proof).

\vspace{1mm}

If $\langle\,[\mathcal{T}_1],[\mathcal{T}_2]\,\rangle$ denotes the Lorentz-invariant scalar product on the space of gravity shifts, then the Lorentz-invariance ``distance'' between two gravity vacua $\mathbb{M}_1$ and $\mathbb{M}_2$ is equal to the norm of the shift $[\mathcal{T}]:\mathbb{M}_1\mapsto \mathbb{M}_2$ relating them 
\begin{equation}
    d(\mathbb{M}_1,\mathbb{M}_2)=\sqrt{\langle\,[\mathcal{T}],[\mathcal{T}]\,\rangle}\,.
\end{equation}

\section{Alice in Boundaryland}\label{Alicebdyland}

In this section, we would like to reconstruct Minkowski spacetime holographically, i.e. purely from geometrical data at null infinity. 
Let us imagine that Alice fell in a rabbithole and is stucked in ``Boundaryland'' (the conformal Carrollian manifold $\mathscr{I}_{d+1}$) aka null infinity. Nevertheless, she would like to reconstruct her own world, ``Interiorland'' (the Lorentzian manifold $\mathbb{M}_{d+2}$) aka Minkowski spacetime, purely from her local experiments. 

This idea was very early on pushed to its ultimate consequences by Newman in the form of his theory of Heaven-space \cite{Newman:1976gc}, relating asymptotically shear-free null geodesics congruence, solutions of the good-cut equations, along three-dimensional null infinity and four-dimensional self-dual radiative (complexified) spacetimes (see e.g. \cite{Adamo:2009vu} for a review). We will here stuck to real, rather than complexified, spacetimes for which good cuts are much more elementary and identify with points of Minkowski spacetime.

\subsection{Gravitational vacua and good cuts at null infinity}\label{extrasubsect}

From Definition \ref{def grav vacua}, one can
see that the affine space of cuts is the total space of a principal bundle
$$\Pi:\Gamma(\mathscr{I}_{d+1})\twoheadrightarrow\mathbb{V}$$
over the affine space $\mathbb{V}$ of gravitational vacua with the translation group $\mathbb{R}^{d+1,1}$ as structure group.
In this picture, over any specific element $v\in \mathbb{V}$, the fibre $\Pi^{-1}(v)$ is an orbit of the translation group $\mathbb{R}^{d+1,1}$. Each such fibre is a gravitational vacuum which can be identified with a copy of Minkowski spacetime $\mathbb{M}_{d+2}$ inside the space  of all cuts. This motivates the following definition.

\begin{Definition}
Given a gravitational vacuum $v\in \mathbb{V}$, the corresponding fibre $\Pi^{-1}(v)\subset\Gamma(\mathscr{I}_{d+1})$ over this element is an orbit of the translation subgroup $\mathbb{R}^{d+1,1}\subset C^\infty(S^d)$. The elements $s$ of the fibre are called the \textit{good cuts} of the corresponding gravitational vacuum.
\end{Definition}

\begin{Remark}
Remember that a generic supertranslation will send one fibre to another one, therefore a family of good cut is not a BMS-invariant notion.
\end{Remark}

\subsection{Good cuts at null infinity of compactified Minkowski spacetime}

Let us now fix a gravitational vacuum, realised as a given Minkowski spacetime $\mathbb{M}_{d+2}$ with conformal compactification $\overline{\mathbb{M}}_{d+2}$. Following Definition \ref{defcuts}, the cuts are  sections of its null boundary, say future null infinity $\mathscr{I}_{d+1}^+$. The good cuts are represented in Figure \ref{fig: cuts}.
In the space $\Gamma(\mathscr{I}^+_{d+1})$ of all cuts, one can define 
them as some specific sections of $\mathscr{I}_{d+1}^+$: the intersections $\mathscr{C}_d=\mathscr{N}_{d+1}^+\cap \mathscr{I}_{d+1}^+$ between two cones, the future null cone $\mathscr{N}_{d+1}^+$ of a
bulk point $p\in\mathbb{M}_{d+2}$ and future null infinity $\mathscr{I}_{d+1}^+$.
Note that, from the boundary point of view, this is a choice. 
In fact, from the point of view of the BMS group, there is \textit{a priori} no reason to define 
this family of cuts as ``the good one''.

\begin{figure}
\centering
\includegraphics[width=2in]{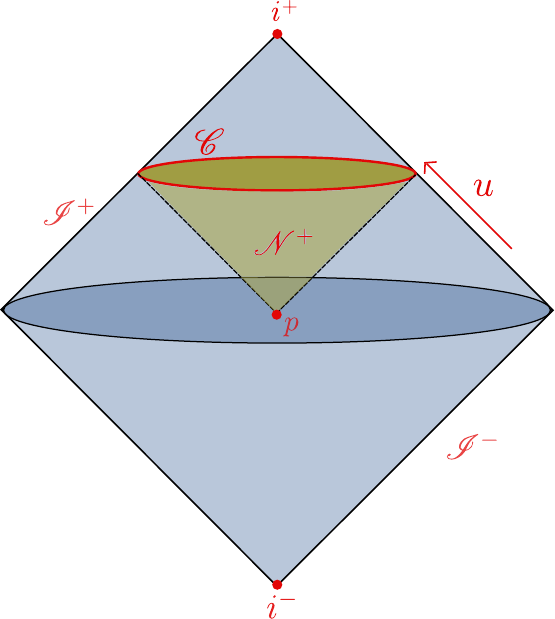}
\caption{A specific good cut $\mathscr{C}$ emanating from $p$}
\label{fig: cuts}
\end{figure}

The importance of the notion of good cut comes from the following property: given the conformally compactified Minkowski spacetime $\overline{\mathbb{M}}_{d+2}$, there is a one-to-one correspondence between these good cuts at future null infinity  and points in Minkowski
spacetime $\mathbb{M}_{d+2}$ (aka events).
This can be easily seen from Figure \ref{fig: cuts}: 
from one point $p \in \mathbb{M}_{d+2}$ in the bulk there is exactly one corresponding intersection $\mathscr{C}_d \simeq S^d$ between its future null cone $\mathscr{N}^+_{d+1}$ and future null infinity $\mathscr{I}_{d+1}^+$.
Conversely, given a good cut $\mathscr{C}_d\subset\mathscr{I}^+_{d+1}$, it is 
by construction defining an interior light-cone $\mathscr{N}^+_{d+1}$ emanating from a bulk point $p$. \\

It is important to realise that it is indeed possible to reconstruct Minkowski spacetime purely at $\mathscr{I}_{d+1}$ seen as a conformal Carrollian manifold, if one is given a choice of Poincar\'e subgroup $ISO(d+1,1)\subset BMS_{d+2}$. This holographic reconstruction is the subject of the next subsection.

\subsection{Holographic reconstruction of Minkowski spacetime from good cuts at null infinity}

In the previous section, we have seen that there is a one-to-one correspondance between points in the bulk of Minkowski
spacetime and good cuts at $\mathscr{I}^{+}_{d+1}$. 
Nevertheless, our definition of good cut refers to a point $p$ of Minkowski spacetime.
We will see here that it is possible to characterise the good cuts intrinsically at $\mathscr{I}^{+}_{d+1}$, without
any reference to points in the bulk.
This will give a completely holographic reconstruction of Minkowski spacetime from good cuts at null infinity.

\vspace{1mm}
One needs to proceed in two steps: First, choose an initial simultaneity slice, this defines a gravity vacuum $\mathbb{M}\in \mathbb{V}$. Second, produce all the corresponding good cuts by means of translations.

\paragraph{Pick a cut.}
We first need to choose one simultaneity slice at $\mathscr{I}^{+}_{d+1}$. We recall from Example \ref{example simultaneity slices} that it is equivalent to choosing a specific cut at  $\mathscr{I}^{+}_{d+1}$. This step is crucial since it effectively amounts to selecting a gravity vacuum. 

\paragraph{Produce the corresponding good cuts by translation.} 
As explained in Section 
\ref{extrasubsect}, we define the subset of good cuts as the orbit under translation of our chosen simultaneity slice. This orbit will then correspond to one
specific gravitational vacuum.
Correspondingly, the set of orbits under translation of all the simultaneity slices will be in one-to-one correspondence with the gravitational vacua:
\begin{equation}
   \begin{alignedat}{2}
    \{\text{Translation orbits of simultaneity slices}\}
    \longleftrightarrow \{\text{Good cuts subspaces}\} \\
    \longleftrightarrow \{\text{Minkowski spacetimes}\} 
    \longleftrightarrow \{\text{Poincar\'{e} subgroups of } BMS\}.
\end{alignedat} 
\end{equation}

In order to  see the good cuts at $\mathscr{I}^{+}_{d+1}$, 
we will now construct the action of the translation group 
on the simultaneity slice $u=u_0$ in the flat Carrollian spacetime $\mathbb{R}\times\mathbb{R}^d$ from Example \ref{flatCarroll} with adapted coordinates $(u,\vec x)$. Following the remark in Example \ref{ScriCarroll}, this flat Carrollian spacetime is seen as representing null infinity $\mathscr{I}_{d+1}\simeq\mathbb{R}\times S^d$ of which a single ray has been removed (corresponding to the point at infinity of the base Euclidean space of the flat Carrollian spacetime).
A general translation $T\in\mathbb{R}^{d+1,1}$ is parametrised by 
an element $(a, \vec{b}, c) $ in $\mathbb{R} \times \mathbb{R}^d \times \mathbb{R}$ following the realisation in Example \ref{translationscoords}.

\begin{itemize}
\item[a)] Under a time translation $u^\prime = u + a$, the simultaneity slice $u=u_0$ can be sent to another simultaneity slice $u=u_1$, at a different Carrollian time, see Figure \ref{fig:time transl}. 
\item[b)] Under a Carroll boost, $u^\prime = u + \vec{b}\cdot \vec{x} $, the original simultaneity slice $u=u_0$ is tilted to the simultaneity slice $u = u_0+ \vec{b}\cdot \vec{x}$, as can be seen in Figure \ref{fig:carroll boost}. 
\item[c)] Under a Carroll temporal special conformal transformation, $u^\prime = u + c\, |\vec{x}|^2$, a simultaneity slice $u = u_0$ is sent to a paraboloid of revolution $u = u_0 + c\, |\vec{x}|^2$ around the time axis, as can be seen in Figure \ref{fig:spatial transl}.
\end{itemize}

\begin{figure}[h!]
  \centering
  \captionsetup{width=0.8\textwidth}
  \begin{subfigure}[b]{0.43\textwidth}
    \centering
\includegraphics[width=2in]{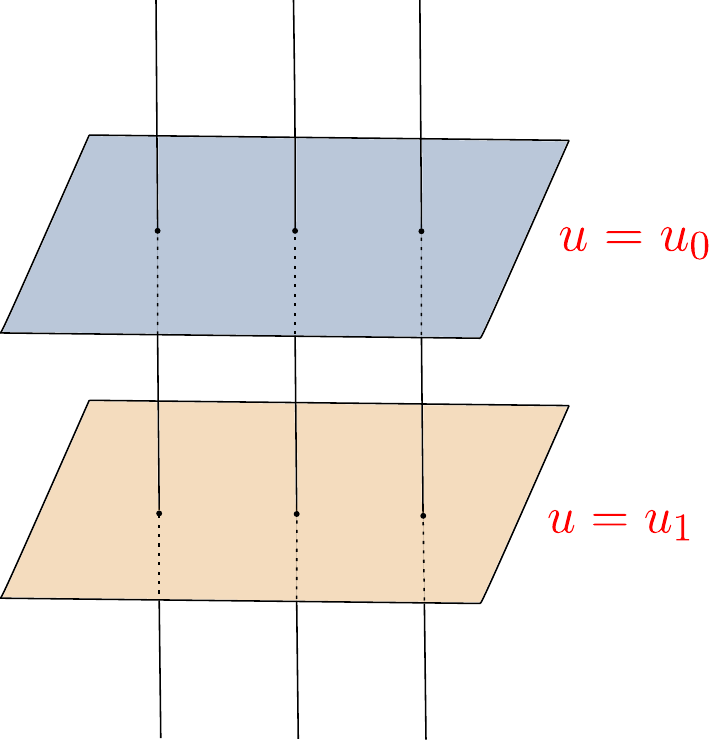}
    \caption{}
    \label{fig:time transl}
  \end{subfigure}
  \hfill
  \begin{subfigure}[b]{0.47\textwidth}
    \centering
    \includegraphics[width=2in]{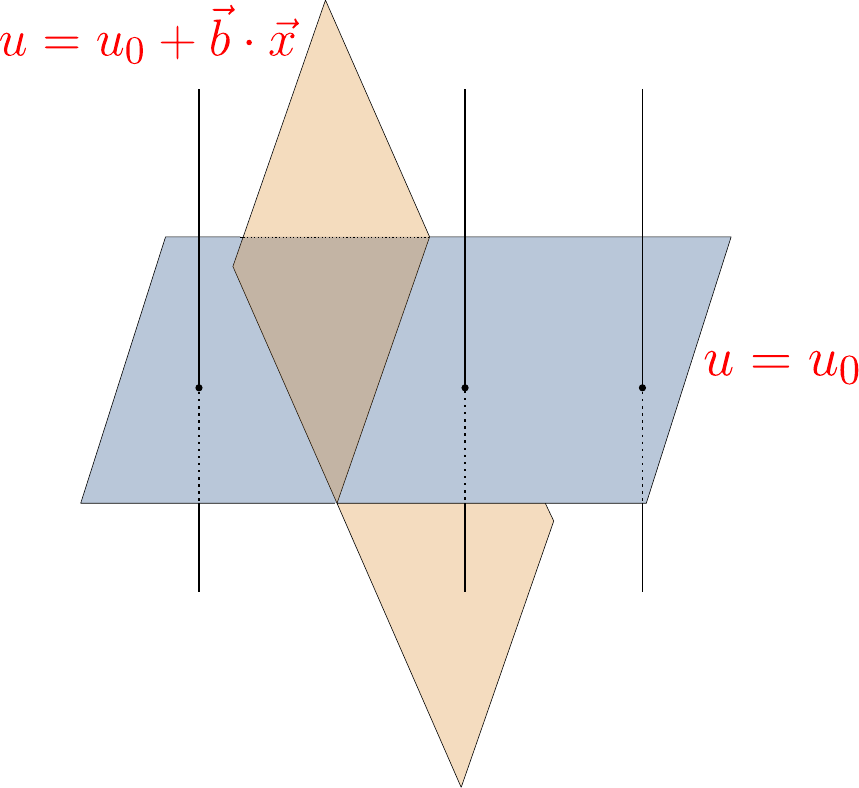}
    \caption{}
    \label{fig:carroll boost}
  \end{subfigure}
  \begin{subfigure}[b]{0.47\textwidth}
    \centering
    \includegraphics[width=2in]{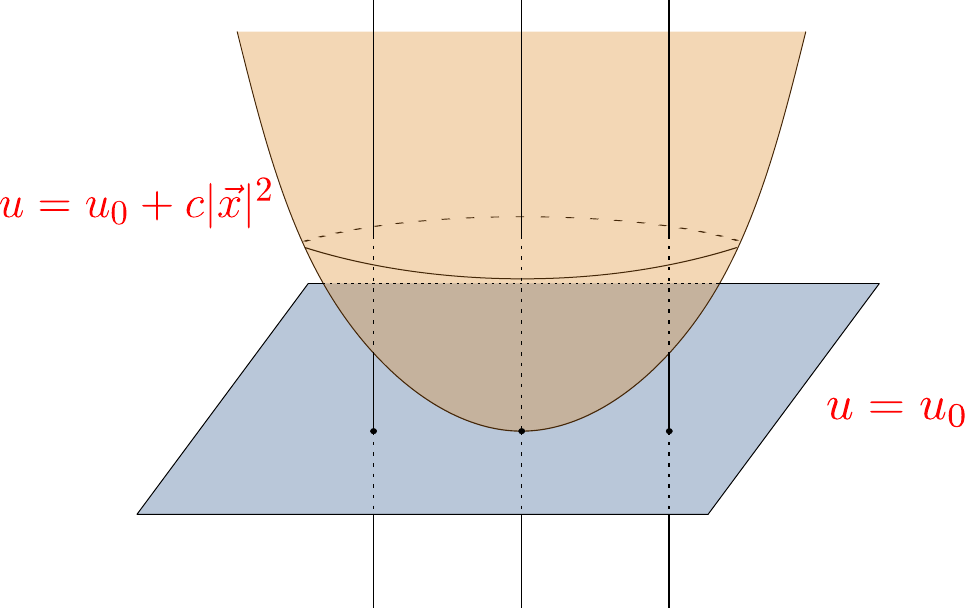}
    \caption{}
    \label{fig:spatial transl}
  \end{subfigure}
  \caption{The image of a simultaneity slice (grey hyperplane) in flat Carroll spacetime under (a) a time translation, (b) a Carroll boost, (c) a Carroll temporal special conformal transformation.}
  \label{fig:figures_cote_a_cote2}
\end{figure}

Therefore, declaring that this orbit defines the set of all good cuts,
one can conclude that the good cuts are in one-to-one correspondence with 
the paraboloids of revolution around their vertical axis 
(including their degeneration into hyperplanes 
transverse to the time axis). In fact, any such a 
paraboloid can be described by an equation of the form 
\begin{equation}
u= a + \vec{b}\cdot \vec{x} + c\, \vec{x}^2 
=c\,\big|\,\vec x+\tfrac1{2c}\vec b\,\big|^2+\big(a-\tfrac1{4c}|\vec{b}|^2\big)
\end{equation}
with $(a, \vec{b}, c) \in \mathbb{R}^{d+2}$. Alternatively, these good cuts can be obtained as the kernel of \eqref{goodcuteqhom}.

\section{Unitary irreducible representations of BMS group}\label{BMSUIRs}

In order to investigate the implementation of BMS symmetry in QFT, one may rest in a safe harbor by adopting the group-theoretical approach of Weinberg's textbook for the foundation of relativistic QFT: if one wants to combine the rules of quantum mechanics with special relativity, then one is naturally lead to the standard construction of zero-, one- and multi-particle states \cite{Weinberg:1995mt}. 
First, the trivial representation of the Poincar\'e group defines the Poincar\'e vacua annihilated by all generators: the zero-particle state. Second, unitary irreducible representations of Poincar\'e group are identified with the possible Hilbert spaces of one-particle states. Third, suitably tensoring the Hilbert space of one-particle states defines the Fock space of all multi-particle states. This motivates the systematic study of the unitary irreducible representations of BMS group as the natural analogue of usual particles on flat spacetime if one takes into account the asymptotic symmetries of the latter.

The classification of the unitary irreducible representations was performed for $BMS_4$ by McCarthy in the seventies \cite{Mccarthy:1972ry,McCarthy_72-I,McCarthy_73-II,McCarthy_73-III,McCarthy:1974aw,McCarthy_75,McCarthy_76-IV,McCarthy_78,McCarthy_78errata} (see also \cite{Cantoni:1966,Girardello:1974sq,Bekaert:2024uuy,Bekaert:2025kjb}) and, more recently, for $BMS_3$ (including super-rotations) in \cite{Barnich:2014kra,Barnich:2015uva,Oblak:2016eij,Melas:2017jzb}.
We will briefly sketch the method of induced representation by starting with Poincaré group (Wigner's classification) and then proceed to BMS group (McCarthy's classification). This method allows for a smooth generalisation in any dimension (see e.g. the review \cite{Bekaert:2006py} for the Poincar\'e group) so our presentation holds in any dimension, but a complete classification is beyond reach for the BMS group in higher dimensions.

\subsection{Unitary irreducible representations of Poincaré group}\label{UIRsPoinc}

The Poincar\'e group  $ISO(d+1,1)$ is the semidirect product of a semisimple Lie group, $SO(d+1,1)$, acting on an Abelian normal subgroup $\mathbb{R}^{d+1,1}$. Furthermore, it is finite-dimensional so Wigner's method of induced representation applies in a straightforward way. 

Let us briefly sketch this method (see e.g. \cite{Bekaert:2006py} for more details): 
		\begin{enumerate}
			\item Consider the UIRs of the translation subalgebra ${\mathbb R}^{d+1,1}$. By Schur lemma, they are one-dimensional representations 	
			labeled by real eigenvalues $p_A$ of the Hermitian generators $\hat{P}_A$ which identify with the momentum eigenstates.
			\item Identify the orbit and stabiliser of these eigenvalues under the action of the Lorentz group $SO(d+1,1)$. They are respectively the \textit{mass shells} $\mathcal{O}_p$ and the \textit{little groups} $\ell_p$ (see Table \ref{table1} for the four possible cases).
			\item Choose an UIR of the little group $\ell_p$. This representation corresponds to the ``spinning'' degrees of freedom or, in other words the ``physical components'' of the wave function. 
            \item Induce the UIR of the Poincar\'e group $ISO(d+1,1)$ from this UIR of the little group $\ell_p\subset SO(d+1,1)$.
		\end{enumerate}
In first quantisation, the elements of this infinite-dimensional Hilbert space carrying a UIR of the Poincar\'e group are interpreted as the square-integrable wave function of an elementary particle. These wave functions have support on the orbit $\mathcal{O}_p$ of the momentum $p_A$ under the action of Lorentz group and takes values in the UIR of the little group $\ell_p$.

    \begin{table}
    \begin{center}
		\begin{tabular}{|c|c|c|c|}
			\hline
			UIR & Orbit (as mass-shell) & Little group & Orbit (as coset) \\\hline\hline
			\small{Massive} & \tiny{2-sheeted hyp} $p^2=+m^2$ & $SO(d+1)$ & $\frac{SO(d+1,1)}{SO(d+1)}\simeq H_{d+1}$\\\hline
			\small{Massless} & \tiny{light-cone} $p^2=0$ & $ISO(d)$ & $\frac{SO(d+1,1)}{ISO(d)}\simeq \mathscr{I}_{d+1}$ \\\hline
			\small{Tachyonic} & \tiny{1-sheeted hyp} $p^2=+m^2$ & $SO(d,1)$ & $\frac{SO(d+1,1)}{SO(d,1)}\simeq dS_{d+1}$ \\\hline
			\small{Zero-momentum} & \tiny{origin} $p_A=0$ & $SO(d+1,1)$ & point \\\hline
		\end{tabular}
        \caption{The four possible cases of orbits and stabilisers for real momenta}
        \label{table1}
            \end{center}
	\end{table}
    
\subsection{Unitary irreducible representations of BMS group}

The group  $BMS_{d+2}$ is the semidirect product of a semisimple Lie group, $SO(d+1,1)$, acting on an Abelian normal subgroup, $C^\infty(S^d)$, so the method of induced representations also applies, although some more care is in order since the Abelian group is infinite-dimensional.

The method of induced representations mimicks the one in Section \ref{UIRsPoinc}: 
		\begin{enumerate}
			\item Consider the UIRs of the supertranslation subalgebra $C^\infty(S^d)$. Again, by Schur lemma they are			
			labeled by real eigenvalues of the generators $\hat{\mathcal{T}}$. These eigenvalues can be seen as linear maps that associate to each supertranslation generator $\hat{\mathcal{T}}$ a real number. In this sense, the eigenvalues belong to the dual of the vector space of supertranslations. Accordingly, they can be thought of as primary fields (actually, distributions) on the celestial sphere $S^d$ with scaling dimension $d+1$. They are called \textit{supermomenta}, by analogy with the Poincar\'e case, and will be denoted $\mathcal{P}(x)$. 
            
The pairing between supertranslations and supermomenta reads explicitly as 
\begin{equation}
                \langle\mathcal{P},\mathcal{T}\rangle=\int_{S^d}d^dx\,\sqrt{\small\bar\gamma(x)}\,  \mathcal{P}(x)\,\mathcal{T}(x)\,.\label{pairingexplicit}
\end{equation}
            
            \begin{Proposition}
            There is a canonical linear projection map 
            $\pi:\mathcal{P}\mapsto p$
            that sends any supermomentum $\mathcal{P}$ onto its corresponding momentum $p=\pi(\mathcal{P})$.
            \end{Proposition}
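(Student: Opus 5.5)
The projection is the transpose (dual map) of the canonical inclusion of translations into supertranslations. By Proposition \ref{translationasinvariantsubmodule}, the translation group $\mathbb{R}^{d+1,1}$ sits inside $C^\infty(S^d)$ as the unique Lorentz-invariant subspace. Denote this inclusion by $\iota:\mathbb{R}^{d+1,1}\hookrightarrow C^\infty(S^d)$. Concretely, $\iota(T)$ is the solution of \eqref{goodcuteqhom}, i.e. $\mathcal{T}_T(\vec x)=a+\vec b\cdot\vec x+c|\vec x|^2$ in the flat patch, or the restriction of the linear function $T_AX^A$ to the null cone in the ambient picture of Remark \ref{translationambient}. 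Supermomenta are elements of the (topological) dual $C^\infty(S^d)^*$, and momenta are elements of the dual $(\mathbb{R}^{d+1,1})^*$. I would therefore define
\begin{equation*}
\pi:=\iota^*\,,\qquad \langle\pi(\mathcal{P}),T\rangle:=\langle\mathcal{P},\iota(T)\rangle=\int_{S^d}d^dx\,\sqrt{\bar\gamma(x)}\,\mathcal{P}(x)\,\mathcal{T}_T(x)\,,
\end{equation*}
using the pairing \eqref{pairingexplicit}.

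The proof then has three steps. First, $\pi$ is well defined and linear: the right-hand side is linear in $T$ and in $\mathcal{P}$, and it is finite because $\mathcal{P}$ is a distribution and $\iota(T)$ is smooth. Second, $\pi$ is surjective, since the transpose of an injective map between these spaces is onto. Explicitly, write $p_A$ as the components of $\pi(\mathcal{P})$ against a basis $T^{(A)}$ of translations. In the ambient realisation, with the celestial sphere parametrised by null vectors $q^A(x)$, this gives $p^A=\int_{S^d}\sqrt{\bar\gamma}\,\mathcal{P}(x)\,q^A(x)\,d^dx$, which is the familiar formula for the momentum as the first harmonics of the supermomentum.

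The third step, and the one that justifies the word ``canonical'', is Lorentz equivariance together with independence from any choice. $\iota$ involves no choice of cut or vacuum, because the translation subgroup is canonical by Proposition \ref{translationasinvariantsubmodule}. It is also $SO(d+1,1)$-equivariant, because the subspace is invariant under the action \eqref{BMSb}. The pairing \eqref{pairingexplicit} is Lorentz-invariant because $\mathcal{T}$ and $\mathcal{P}$ have conformal weights $-1$ and $d+1$, which sum to $d$, exactly what is needed for the measure to compensate. Hence $\pi(g\cdot\mathcal{P})=g\cdot\pi(\mathcal{P})$.

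The main obstacle is that third step: making sure the Lorentz actions on $\mathcal{P}$ and on $\iota(T)$ are genuinely dual, so that the weights cancel. I would check this first infinitesimally, using \eqref{BMSb} and its dual with weight $d+1$ and integrating by parts on $S^d$; the divergence terms cancel precisely when the weights sum to $d$. Compared with that, the transpose construction is routine. I would also remark that the kernel of $\pi$ is the annihilator of $\mathbb{R}^{d+1,1}$, which I take to be the space of ``soft'' supermomenta; this matches the hard/soft decomposition discussed afterwards.
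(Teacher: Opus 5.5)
Your proposal is correct and follows essentially the same route as the paper. The paper presents $\pi$ as the dual counterpart of Proposition~\ref{translationasinvariantsubmodule}, i.e.\ the restriction of supermomenta to the canonical translation subspace, and obtains $p^A=\int_{S^d}d^dx\sqrt{\bar\gamma}\,X^A(x)\,\mathcal{P}(x)$ by inserting $\mathcal{T}(X)=T_AX^A$ into the pairing. The one minor difference is that the paper gets the Lorentz invariance of the pairing from the ambient ``covariance lemma'' (Appendix~\ref{ambient}), whereas you check it infinitesimally by integrating by parts with weights $-1$ and $d+1$; both work.
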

            This proposition is the dual counterpart of Proposition \ref{translationasinvariantsubmodule}. In practice, the associated momentum $p_A$ is obtained by considering the eigenvalues of the generators $\hat{P}_A$ among the supertranslation generators $\hat{\mathcal{T}}$. In practice, it reads explicitly as 
            \begin{equation}     p^A=\int_{S^d} d^dx\,\sqrt{\small\bar\gamma(x)}\, \, X^A(x)\, \mathcal{P}(x)\,,\label{pexplicit}
            \end{equation}
            where $X^A(x)$ are the parametric equations of a unit sphere $S^d$ embedded in $\mathbb{R}^{d+1,1}$ as the intersection of the null cone $X^2=0$ and an affine hyperplane
            of the form $N_A X^A=1$ for a unit timelike vector ($N^2=-1$). See Appendix \ref{ambient} for a proof of this formula.
			\item Identify the orbit and stabiliser of these supermomenta under the action of the conformal group $SO(d+1,1)$.
			The stabiliser of a supermomentum $\mathcal{P}$ is called a \textit{BMS little group} and denoted  $\ell_{\mathcal P}$. It is always a subgroup $\ell_{\mathcal P}\subseteq\ell_{\pi(\mathcal{P})}$ of the corresponding \textit{Poincar\'e little group} $\ell_{\pi(\mathcal{P})}$.
            \item Choose an UIR of the BMS little group $\ell_{\mathcal P}$.
			\item Induce the UIR of the BMS group $BMS_{d+2}$ from this UIR of the BMS little group $\ell_{\mathcal P}$.
		\end{enumerate}
	The wave function of a BMS particle has support on the orbit $\mathcal{O}_{\mathcal{P}}$ of the supermomentum $\mathcal{P}$ under the action of the Lorentz group and takes values in the UIR of the BMS little group $\ell_{\mathcal{P}}$.

\begin{Remark}
    The orbit of a supermomentum $\mathcal P$ is merely a coset of the (finite-dimensional) Lorentz group
\begin{equation}\label{BMSmasshel}
\mathcal{O}_{\mathcal{P}}\simeq \frac{SO(d+1,1)}{\ell_{\mathcal{P}}}    
\end{equation}
so all orbits (or  ``BMS mass shells'') are finite-dimensional. In this sense, the Hilbert spaces of states for any BMS particle are relatively simple. For instance, they are always separable (since the wave functions live on a finite-dimensional manifold). 
\end{Remark}

\subsubsection{Hard representations of the BMS group}

We refer to \cite{Bekaert:2024uuy,Bekaert:2025kjb} for a systematic presentation of hard BMS representations and how they fit into McCarthy's classification. We only here sketch essential features.

\vspace{1mm}
For massive particles, one can associate to any momentum $p_A$ an associated \textit{hard supermomentum} $\mathcal{P}$ which read as $\mathcal{P}(x)\propto\big(p_A X^A(x)\big)^{-(d+1)}$ where $X^A(x)$ is the normalised null vector defined below Equation \eqref{pexplicit}. The corresponding representation for $d=2$ was first realised in \cite{Longhi:1997zt}. On the other hand, the asymptotic action of BMS on massive fields was worked out much more recently \cite{Campiglia:2015kxa} and the explicit relationship of these works to McCarthy's classification only done in \cite{Borthwick:2024skd,Bekaert:2025kjb}.

For massless particles, it was already known to Sachs \cite{Sachs:1962zza} that, in four dimensions, massless fields at null infinity define a representation of $BMS_4$ (see \cite{Bekaert:2022ipg} for the extension to higher dimensions). As far as we know, however, it was only in \cite{Chatterjee:2017zeb} that this representation was explicitly related to McCarthy's classification. The corresponding hard supermomentum is a delta function (see the example below for more details).

\begin{Example}
In more physical terms, hard BMS particles coincide with the usual boundary data of Poincar\'e particles of non-zero momentum. If $p_A$ is a null momentum, $p^2=0$, then $p^A = \omega \,q^A(\vec{x})$ where $\omega$ is the energy and $q^A(\vec x)$ is a normalised null vector. The scattering data of a massless scalar field at null infinity will be of the form 
\begin{equation}\label{Carrollian scalar field}
    \Phi(u,\vec{x}) = \int d\omega\left( e^{-i \omega u}a^{\dagger}(\omega,\vec{x}) - e^{i \omega u}a(\omega,\vec{x}) \right)
\end{equation}
where $a(p)= a(\omega,\vec{x})$ are the usual Fourier modes of a massless real scalar field. The BMS group acts on null infinity via \eqref{BMStransfosconfCarr} and therefore naturally acts on \eqref{Carrollian scalar field} via pullback. The resulting action of a supertranslation $\mathcal{T}(\vec{x})$ is given by
\begin{align}
   a(\omega, \vec{x}) &\mapsto e^{i \omega \mathcal{T}(\vec{x})} a(\omega, \vec{x}) = e^{i\int d^{d}y \,\mathcal{P}(\vec{y}) \mathcal{T}(\vec{y})} a(\omega, \vec{x})
\end{align}
where $\mathcal{P}(\vec{y}) = \omega\, \delta^{(d)}(\vec{y}-\vec{x})$ is the \textit{massless hard supermomentum} corresponding to the momentum $p_A = \omega\, q_A(\vec{x})$.
\end{Example}

Hard momenta are such that their BMS and Poincar\'e little groups coincide: $\ell_{\mathcal{P}}=\ell_{\pi(\mathcal{P})}$. Therefore, the orbits of hard supermomenta coincide with usual mass shells: $\mathcal{O}_{\mathcal{P}}\simeq \mathcal{O}_{\pi(\mathcal{P})}$. This implies that there is a one-to-one correspondence between
\begin{center}
hard UIRs of BMS $\longleftrightarrow$ non-zero momentum UIRs of Poincar\'e.
\end{center}

Let us emphasise the implication of this result: for a generic BMS representation, the choice of a gravitational vacuum $ISO(d+1,1) \subset BMS_{d+2}$ will in general decompose the BMS wavefunction into infinitely many usual Poincaré particles. In practice, this decomposition will depend on the chosen gravitational vacuum. This does \emph{not} happen for hard BMS particles: the same wavefunction can describe both a BMS or Poincaré particle. What is more, this applies irrespectively of the Poincaré group chosen. In this sense, hard representations are insensitive to the choice of gravity vacua, \emph{for hard particles all Poincaré groups are equal}.

\subsubsection{Soft representations of the BMS group}

On the opposite extreme, one can define \textit{soft supermomenta} as those supermomenta $\mathcal{P}$ which are such that their associated momentum is zero, $\pi(\mathcal{P})=0$.
Thus their Poincar\'e little group is the whole Lorentz group $SO(d+1,1)$. Therefore, a priori the dimensions of the orbits of soft supermomenta range from 0 to $\frac{(d+2)(d+1)}{2}$. All soft supermomenta belong to the image of the GJMS operator \eqref{GJMSd+2} (see Appendix \ref{ImGJMS} for a proof), i.e.
\begin{equation}
    \pi(\mathcal{P})=0\qquad\Longleftrightarrow\qquad\mathcal{P}=\hat{P}_{d+2}\,\mathcal{N}\quad\text{for some}\quad\mathcal{N}\in\frac{C^\infty(S^d)}{\mathbb{R}^{d+1,1}} \,.
\end{equation}

The zero-momentum representations of the Poincar\'e group $ISO(d+1,1)$ are unfaithful UIRs of the Poincar\'e group (since the translations act trivially) but they are faithful UIRs of the Lorentz group $SO(d+1,1)\simeq\frac{ISO(d+1,1)}{\mathbb{R}^{d+1,1}}$. In other words, the zero-momentum representations of Poincar\'e group identify with the UIRs of the Lorentz group. There is a similar but much richer situation for the BMS group. 

The following quotient group was called the ``Komar group'' by McCarthy in \cite{McCarthy_72-I}, as a credit to \cite{PhysRevLett.15.76}.
\begin{Definition}[McCarthy, 1972]
The \textit{Komar group} is the quotient group of the BMS group by the normal subgroup of translations
		$$\frac{BMS_{d+2}}{\mathbb{R}^{d+1,1}}\simeq SO(d+1,1)_s\ltimes\frac{C^\infty(S^d)}{\mathbb{R}^{d+1,1}}$$
which, given a cut $s\in\Gamma(\mathscr{I}_{d+1})$, decomposes as the semidirect product of the Lorentz group acting on the normal subgroup of shifts of gravitational vacua.
\end{Definition}			
		
\begin{Proposition}
    The soft representations of the BMS group $BMS_{d+2}$ of non-zero supermomentum are unfaithful UIRs of the BMS group (since the translations act trivially) but they are faithful UIRs of the Komar group $\frac{BMS_{d+2}}{\mathbb{R}^{d+1,1}}$.
\end{Proposition}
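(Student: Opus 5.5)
The plan is to use Mackey's description of induced representations. In the UIR of $BMS_{d+2}$ induced from a supermomentum $\mathcal{P}$, a supertranslation $\mathcal{T}$ acts on a wave function $\psi$ supported on the orbit $\mathcal{O}_{\mathcal P}$ by multiplication:
\begin{equation*}
(U(\mathcal{T})\psi)(\mathcal{Q})=e^{i\langle\mathcal{Q},\mathcal{T}\rangle}\psi(\mathcal{Q}),\qquad \mathcal{Q}\in\mathcal{O}_{\mathcal{P}},
\end{equation*}
where the pairing is \eqref{pairingexplicit}. The argument has three steps: show that translations act trivially, deduce that the representation factors through the Komar group, and then prove that the kernel of this factored representation is trivial.

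For the first step, the key observation is that the projection $\pi$ is Lorentz-equivariant. It is the dual of the $SO(d+1,1)$-equivariant inclusion $\mathbb{R}^{d+1,1}\hookrightarrow C^\infty(S^d)$ of Proposition \ref{translationasinvariantsubmodule}. Hence $\pi(\mathcal{P})=0$ implies $\pi(\mathcal{Q})=0$ for every $\mathcal{Q}=g\cdot\mathcal{P}$ on the orbit. For a translation $T\in\mathbb{R}^{d+1,1}$ we then have $\langle\mathcal{Q},T\rangle=\langle\pi(\mathcal{Q}),T\rangle=0$, using \eqref{pexplicit}. So $U(T)=\mathrm{id}$, and the representation is unfaithful on $BMS_{d+2}$.

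For the second step, $\mathbb{R}^{d+1,1}$ is normal in $BMS_{d+2}$ and lies in $\ker U$. Therefore $U$ descends to a unitary representation $\tilde U$ of the Komar group $BMS_{d+2}/\mathbb{R}^{d+1,1}$. Irreducibility is inherited automatically, because $\tilde U$ and $U$ have the same image operators.

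The third step, faithfulness, is the main obstacle. Take $[g]=[(\Lambda,\mathcal{T})]$ with $\tilde U([g])=\mathrm{id}$, using the decomposition relative to a chosen cut $s$. First I would show that $\Lambda$ is trivial. The operator $U(\Lambda,\mathcal{T})$ moves the support of a wave function concentrated near $\mathcal{Q}$ to a neighbourhood of $\Lambda\cdot\mathcal{Q}$. So $\Lambda$ must fix every point of the orbit, i.e. $\Lambda\in\bigcap_{h}h\,\ell_{\mathcal P}h^{-1}$. This intersection is a normal subgroup of $SO(d+1,1)$ contained in a proper stabiliser. It is proper because $\mathcal{P}\neq0$ is not Lorentz-invariant: by Proposition \ref{UIRLorentz} the space of vacua shifts is an irreducible, infinite-dimensional representation, so it has no nonzero invariant vector. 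Since $SO(d+1,1)$ is simple modulo its finite centre, this normal subgroup is central, and the centre acts trivially on $S^d$. Then $\Lambda$ acts on the little-group factor by a scalar, and I would absorb that phase (or check it separately).

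With $\Lambda$ trivial, we need $e^{i\langle\mathcal{Q},\mathcal{T}\rangle}=1$ for all $\mathcal{Q}\in\mathcal{O}_{\mathcal P}$. Since the orbit is closed under $\mathcal{Q}\mapsto$ its images and the condition must hold continuously, this forces $\langle g\mathcal{P},\mathcal{T}\rangle=0$ for all $g$. Equivalently $\mathcal{T}$ is annihilated by the Lorentz-invariant subspace spanned by the orbit, whose annihilator is a closed invariant subspace of $C^\infty(S^d)$. This annihilator contains $\mathbb{R}^{d+1,1}$ and is not everything, since $\mathcal{P}\neq0$ modulo translations. Proposition \ref{UIRLorentz} gives irreducibility of $C^\infty(S^d)/\mathbb{R}^{d+1,1}$, so the annihilator equals $\mathbb{R}^{d+1,1}$. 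Hence $[\mathcal{T}]=0$ in the Komar group, which gives faithfulness.

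The delicate points are, first, passing from the phase condition to the linear vanishing. I would handle this using connectedness of the orbit and $\langle\mathcal{P},\mathcal{T}\rangle$ varying continuously, with value $0$ at $g=e$ after the phase is fixed. Second, one must use the topological version of irreducibility.
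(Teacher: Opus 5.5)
Your Steps 1 and 2 coincide with all the paper offers. Its only justification is the parenthetical ``since the translations act trivially'': $\pi(\mathcal{P})=0$ propagates along the orbit by equivariance of $\pi$, so translations lie in the kernel and the representation descends to the Komar group. Faithfulness is asserted without argument, so your Step~3 supplies what the paper omits. Its strategy is the right one: remove the Lorentz component by simplicity, then the supertranslation component by irreducibility. One step of it, however, fails as written.

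The gap is the passage from the phase condition to linear vanishing. Continuity and connectedness only show that $g\mapsto\langle g\mathcal{P},\mathcal{T}\rangle$ is a constant $2\pi k$ with $k\in\mathbb{Z}$. Nothing forces $k=0$.
\begin{itemize}
\item You cannot ``fix the phase'' by re-choosing the representative of $[\mathcal{T}]$. For soft $\mathcal{P}$ and $T\in\mathbb{R}^{d+1,1}$ one has $\langle\mathcal{P},\mathcal{T}+T\rangle=\langle\mathcal{P},\mathcal{T}\rangle$, so $2\pi k$ is an invariant of the kernel element.
\item You cannot rescale $\mathcal{T}$ either, since the kernel is only a subgroup.
\end{itemize}
So your annihilator argument cannot yet be applied, because it needs $\mathcal{T}$ to be annihilated by the span of the orbit itself. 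The fix is to apply the same argument to differences. Constancy gives
\[
\langle g\mathcal{P}-\mathcal{P},\,\mathcal{T}\rangle=0\qquad\text{for all } g .
\]
The span of $\{g\mathcal{P}-\mathcal{P}\}$ is Lorentz-invariant, since $h(g\mathcal{P}-\mathcal{P})=(hg\mathcal{P}-\mathcal{P})-(h\mathcal{P}-\mathcal{P})$. It is nonzero, since $\mathcal{P}$ is not invariant. Its annihilator in $C^\infty(S^d)$ is therefore a closed invariant subspace that contains $\mathbb{R}^{d+1,1}$ and $\mathcal{T}$ but is not all of $C^\infty(S^d)$. By Proposition~\ref{translationasinvariantsubmodule} it equals $\mathbb{R}^{d+1,1}$. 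Hence $[\mathcal{T}]=0$, and $k=0$ follows only a posteriori.

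There are two smaller points.
\begin{itemize}
\item Non-invariance of $\mathcal{P}$ is a statement about the dual of the space of vacua shifts, not about that space itself, so ``no invariant vector'' is the wrong statement. Argue instead that a nonzero invariant soft $\mathcal{P}$ would have a closed invariant kernel of codimension one in $C^\infty(S^d)/\mathbb{R}^{d+1,1}$. Alternatively, invoke the annihilator lemma of Appendix~\ref{ImGJMS}.
\item There is no central phase to ``absorb''. The Lorentz factor of $BMS_{d+2}$ acts effectively on $S^d$ as its conformal group, which is centreless. So the normal subgroup you obtain is trivial and $\Lambda=e$ outright. If there were a nontrivial central element acting trivially on the orbit, faithfulness would genuinely depend on the little-group representation and could not be rescued by absorbing a phase.
\end{itemize}
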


\subsubsection{Generic representations of the BMS group}

To conclude, let us consider supermomenta which are generic (i.e. neither zero, nor hard nor soft). Their Poincar\'e little group $\ell_{\pi(\mathcal{P})}$ is a subgroup of dimension $\frac{d(d+1)}2$, see the three non-trivial cases in Table \ref{table1}. Therefore, the dimension of the BMS little group $\ell_{\mathcal P}\subseteq\ell_{\pi(\mathcal{P})}$ is bounded by this value. Consequently, the dimension of the generic BMS mass shells \eqref{BMSmasshel} is between $0$ and $d+1$. By analogy with what happens for $d=2$ \cite{McCarthy_75,McCarthy_78errata,Girardello:1974sq}, one might speculate that the dimensions of the orbits of soft supermomenta also range from 0 to $d+1$.

Finally, despite the fact that a generic supermomentum is neither hard nor soft, it can always be uniquely, and in a Lorentz-invariant way, be decomposed into a hard and a soft contribution \cite{Bekaert:2024uuy,Bekaert:2025kjb}:
\begin{equation}
    \mathcal{P}(x) = \textrm{P}_p(x) + \hat{P}_{d+2}\,\mathcal{N}(x)
\end{equation}
where $\textrm{P}_p(x)$ is the unique hard supermomentum corresponding to $p= \pi\left(\mathcal{P}\right) =\pi\left(\textrm{P}_p\right)$ and $\hat{P}_{d+2}\,\mathcal{N}$ is soft (see previous subsection). The decomposition has the interesting property of being non-linear:
\begin{equation}
    \mathcal{P}_1(x) + \mathcal{P}_2 (x) = \textrm{P}_3(x) + \hat{P}_{d+2}\Big( \mathcal{N}_1(x) + \mathcal{N}_2(x) + \mathcal{S}(x)\Big)
\end{equation}
where $\textrm{P}_3(x)$ is the hard supermomentum corresponding to $p_3 = p_1 + p_2$ and $\mathcal{S}(\vec{x})$ is a soft contribution encoding the non-linearity of the decomposition. \\

As it turns out, the term $\mathcal{S}(\vec{x})$ appearing in the last expression is in fact directly related to \emph{Weinberg's soft factor} \cite{Weinberg:1995mt}. This is not a coincidence but a manifestation of the fact that the infrared physics of gravity is closely tied up with its asymptotic symmetry group, see \cite{Strominger:2013jfa,He:2014laa} and the review \cite{Strominger:2017zoo}. 
The present lecture notes are not the place to comment further on the relationship between representation theory of the BMS group and infrared physics; a systematic treatment will be presented elsewhere.

\section*{Acknowledgements}

X.B. is grateful to the group ``Physique de l’Univers, Champs et Gravitation'' of Mons University for warm hospitality and kind invitation to deliver some lectures on BMS group theory in June 2024, which are the basis of these notes. We thank Andrea Campoleoni for discussions on the intricate relations between radiation, supertranslation and memory effect in higher dimensions. We are also grateful to Thomas Basile for his careful reading of a preliminary version of these notes.  L.M. is a FRIA grantee of the Fund for Scientific Research — FNRS, Belgium. The work of L.M. was supported by FNRS under Grant FC.55077. 
N.P is a FNRS grantee of the Fund for Scientific Research — FNRS, Belgium.

\appendix

\section*{Appendix}
\addcontentsline{toc}{section}{Appendix}

\setcounter{section}{1}

\subsection{Proof of the invariance of the conformal factor}\label{invproof}

\paragraph{Conformal Carrollian isometries.}
As mentioned in Remark \ref{invariantconformalfactor}, for an invariant Carrollian metric, $\mathcal{L}_n\gamma=0$, the conformal factor is invariant, $\mathcal{L}_n \Omega = 0$.
This follows from the two conditions in Definition \ref{confCarriso}. Indeed, the first condition (i.e. $n^\prime = \Omega^{-1} n$) asserts the preservation of the direction of the fundamental vector field $n$, and thus the preservation of the fibration. Therefore, a conformal Carrollian isometry is necessarily an automorphism of the fibre bundle. In turn, this implies that it projects onto a diffeomorphism of the base. For an invariant metric, this implies that the second condition (i.e. $\gamma^\prime = \Omega^2 \gamma$) admits a well-defined projection $\bar\gamma^\prime = \bar\Omega^2 \bar\gamma$. Finally, this implies that $\Omega= \pi^*\bar\Omega$ must be invariant.

\paragraph{Conformal Carroll-Killing vector fields.} The remark  \ref{infversionremark} is the infinitesimal counterpart of Remark \ref{invariantconformalfactor}.
We have to show that, for an invariant Carrollian metric, the function $f$ appearing in Definition \ref{confCarrollKilling} obeys $\mathcal{L}_n f = 0$.
This can be shown directly as follows. The first condition in Definition \ref{confCarrollKilling} implies $\{n,X\}=f\,n$, while the second condition implies $[{\cal L}_n,{\cal L}_X]\gamma=(2{\cal L}_{n}f)\,\gamma$ since $\mathcal{L}_n \gamma = 0$. One can also check that ${\cal L}_{fn}\gamma=0$ for any function $f$, since $\gamma$ is invariant and its contraction with $n$ vanishes.
Combining these various results leads to $0={\cal L}_{fn}\gamma={\cal L}_{\{n,X\}}\gamma=[{\cal L}_n,{\cal L}_X]\gamma=(2{\cal L}_{n}f)\,\gamma$. This shows that $\mathcal{L}_n f = 0$. 

\subsection{Proof of Sachs theorem}\label{Sachsproof}

We will proof in one shot both Proposition \ref{translationmaxnorm} (in Section \ref{canonicsubgPoinc}) and Theorem \ref{Sachs} (in Section \ref{canonicsubgBMS}) by showing the following lemma, which can be applied for $G$ the Poincar\'e (or BMS) group, $N$ the (super)translation and $Q$ the Lorentz group.

\vspace{1mm}

\noindent\textbf{Lemma.} Let us consider a group $G$ with a proper normal subgroup $N\triangleleft\; G$ and corresponding quotient group $Q=\frac{G}{N}$ (i.e. a group extension $G$ of $Q$ by $N$).\\
If the quotient group $Q$ is simple, then the proper normal subgroup $N$ is maximal.

\vspace{1mm}

\noindent\textbf{Proof:} Let us proceed by contradiction. Assume that $N$ is not a maximal normal proper subgroup of $G$, that is to say there exists a proper normal subgroup $N'\triangleleft G$ of the group $G$ that contains the normal group $N\triangleleft G$ as a subgroup, i.e. $N\subset N'$. In group theory, the \textit{correspondence theorem} (see e.g. Theorem 7.14 in \cite{Humphreys1996}) asserts in particular that 
$$
N\subset N'\triangleleft G\quad\Longrightarrow\quad \frac{N'}{N}\triangleleft \frac{G}{N}\,.
$$
Hence, $\frac{N'}{N}$ is a proper normal subgroup of the quotient group $Q$. However, the latter group $Q$ is assumed to be simple in the lemma. Therefore, the quotient group $\frac{N'}{N}$ must be trivial, i.e. $\frac{N'}{N}\simeq\{e\}$. In turn, this implies that $N\simeq N'$, which is a contradiction.
\qed

\subsection{Spectrum of the GJMS operator}\label{unitproof}

The issue of the positive-definiteness of the scalar product \eqref{hermitianproduct} reduces to the 
positive definiteness of the GJMS operator \eqref{GJMSd+2}.
This property can be seen concretely by investigating its spectrum. Let us recall that the set of eigenvalues of the Laplacian on $S^d$ is 
\begin{equation}
    \underset{L^2(S^d)}{\text{Spec}}\big(\bar\nabla^2\big)=\{-\ell(\ell+d-1)\mid\ell\in\mathbb{N}\}\,.
\end{equation}
Therefore, the spectrum the GJMS operator \eqref{GJMSd+2} is
\begin{eqnarray}
    \underset{L^2(S^d)}{\text{Spec}}\big(\hat{P}_{d+2})&=&\Big\{\prod_{j=0}^{\frac{d}2}\big[\ell(\ell+d-1)+(\tfrac{d}{2}+j)(\tfrac{d}{2}-j-1)\big]\mid\ell\in\mathbb{N}\Big\}\nonumber\\
    &=&\Big\{\prod_{k=-1}^{\frac{d}2-1}\big[(\ell+k)(\ell-k+d-1)\big]\mid\ell\in\mathbb{N}\Big\}\\
    &=&\Big\{\ell(\ell-1)\prod_{p=1}^{\frac{d}2-1}(\ell+p)\prod_{q=0}^{\frac{d}2}(\ell-q+d)\mid\ell\in\mathbb{N}\Big\}\,.\nonumber
\end{eqnarray}
All factors in the last line are manifestly positive, except for the first two factors which vanish for the first two spherical harmonics $\ell=0,1$. The latter harmonics span the Lorentz-invariant subspace $\mathbb{R}^{d+1,1}\subset C^\infty(S^d)$ of translations. Therefore, the operator $\hat{P}_{d+2}$ is positive-definite on the quotient space $\frac{C^\infty(S^d)}{\mathbb{R}^{d+1,1}}$. 

\subsection{Ambient realisation of supermomenta}\label{ambient}

Recall from Remark \ref{translationambient}, that supertranslations can be thought as the restriction of functions $\mathcal{T}(X)$ on Minkowski spacetime $\mathbb{R}^{d+1, 1}$ of homogeneity degree $1$ in the Cartesian coordinates $X^A$, pullbacked on the light-cone $\eta_{AB}X^AX^B=0$. Similarly,
an equivalent realisation of supermomenta is to think about them as restrictions of homogeneous functions $\mathcal{P}(X)$ of degree $-(d+1)$ on Minkowski spacetime ${\mathbb R}^{d+1,1}$, also pullbacked on the light-cone $X^2=0$. In this formulation, the pairing \eqref{pairingexplicit} between supertranslations and supermomenta reads
\begin{equation}
                \langle\mathcal{P},\mathcal{T}\rangle
                =\int_{\mathbb{R}^{d+1,1}} d^{d+2}X\,\,\delta(X^2)\,\delta(N\cdot X-1)\,  \mathcal{P}(X)\,\mathcal{T}(X)\label{pairinganyd}
\end{equation}
where, in the second line, $N\in\mathbb{R}^{d+1,1}$ is a fixed unit timelike vector ($N^2=-1$). It can be shown that the integral does not actually depend on the choice of $N$; this result is sometimes called the ``covariance lemma'' (see page 83 of \cite{Todorov:1978rf}) because it ensures the $SO(d+1,1)$-invariance of the pairing \eqref{pairinganyd}. 

From Remark \ref{translationambient}, one knows that translations can be realised as linear functions $\mathcal{T}(X)=T_AX^A$ on $\mathbb{R}^{d+1,1}$. Inserting them into Equation \eqref{pairinganyd} shows that the associated momentum can be computed explicitly via the integral
\begin{equation}     
p^A=\int_{\mathbb{R}^{d+1,1}}\! d^{d+2}X\,\,\delta(X^2)\,\delta(N\cdot X-1)\, X^A\, \mathcal{P}(X)\,,
\end{equation}
which reproduces Equation \eqref{pexplicit}.
            
\subsection{Image of the GJMS operator}\label{ImGJMS}

An elementary fact in representation theory is the following lemma.
\vspace{1mm}

\noindent\textbf{Lemma.} Consider a vector space $W$ carrying a linear representation of a group $G$. If $V$ is the only non-trivial invariant subspace of $W$, then the annihilator
$$\text{Ann}(V)=\{\varphi\in W^*\mid \varphi(v)=0,\forall v\in V\}$$
is the only non-trivial invariant subspace of the linear dual $W^*$.

\vspace{1mm}

\noindent\textbf{Proof:} Consider a non-trivial invariant subspace $U\subset W^*$ of the linear dual. It is clear that $\text{Ann}(U)$ is a non-trivial invariant subspace of $W$, thus $\text{Ann}(U)=V$. Since $\text{Ann}\big(\text{Ann}(U)\big)=U$, one finds that $U=\text{Ann}(V)$.
\qed

\vspace{1mm}

A corollary of Proposition \ref{translationmaxnorm} and the above lemma is that the space $\text{Ker}(\pi)$ of soft supermomenta is the only non-trivial $SO(d+1,1)$-invariant subspace of the vector space of supermomenta (since the space of soft supermomenta is, by definition, the annihilator of the invariant subspace of translations).

A short proof that $\text{Im}(\hat{P}_{d+2})=\text{Ker}(\pi)$ now follows. The GJMS operator $\hat{P}_{d+2}$ is an intertwiner of $SO(d+1,1)$ representations (mapping the vector space of supertranslations to the vector space of supermomenta). Therefore, its image $\text{Im}(\hat{P}_{d+2})$ is an invariant subspace of the space of supermomenta. Therefore, it must coincide with the vector space $\text{Ker}(\pi)$ of soft supermomenta. 


\bibliographystyle{utphys}
\bibliography{BMSUIR}

@article{Henneaux:1979vn,
    author = "Henneaux, Marc",
    title = "{Geometry of Zero Signature Space-times}",
    reportNumber = "PRINT-79-0606 (PRINCETON)",
    journal = "Bull. Soc. Math. Belg.",
    volume = "31",
    pages = "47--63",
    year = "1979"
}

@article{Gover:2003am,
    author = "Gover, A. Rod and Hirachi, Kengo",
    title = "{Conformally invariant powers of the Laplacian: A Complete non-existence theorem}",
    eprint = "math/0304082",
    archivePrefix = "arXiv",
    doi = "10.1090/S0894-0347-04-00450-3",
    journal = "J. Am. Math. Soc.",
    volume = "17",
    pages = "389--405",
    year = "2004"
}

@article{Campoleoni:2020ejn,
    author = "Campoleoni, Andrea and Francia, Dario and Heissenberg, Carlo",
    title = "{On asymptotic symmetries in higher dimensions for any spin}",
    eprint = "2011.04420",
    archivePrefix = "arXiv",
    primaryClass = "hep-th",
    reportNumber = "NORDITA 2020-103",
    doi = "10.1007/JHEP12(2020)129",
    journal = "JHEP",
    volume = "12",
    pages = "129",
    year = "2020"
}

@article{Satishchandran:2019pyc,
    author = "Satishchandran, Gautam and Wald, Robert M.",
    title = "{Asymptotic behavior of massless fields and the memory effect}",
    eprint = "1901.05942",
    archivePrefix = "arXiv",
    primaryClass = "gr-qc",
    doi = "10.1103/PhysRevD.99.084007",
    journal = "Phys. Rev. D",
    volume = "99",
    number = "8",
    pages = "084007",
    year = "2019"
}

@article{Pate:2017fgt,
    author = "Pate, Monica and Raclariu, Ana-Maria and Strominger, Andrew",
    title = "{Gravitational Memory in Higher Dimensions}",
    eprint = "1712.01204",
    archivePrefix = "arXiv",
    primaryClass = "hep-th",
    doi = "10.1007/JHEP06(2018)138",
    journal = "JHEP",
    volume = "06",
    pages = "138",
    year = "2018"
}

@article{Mao:2017wvx,
    author = "Mao, Pujian and Ouyang, Hao",
    title = "{Note on soft theorems and memories in even dimensions}",
    eprint = "1707.07118",
    archivePrefix = "arXiv",
    primaryClass = "hep-th",
    doi = "10.1016/j.physletb.2017.08.064",
    journal = "Phys. Lett. B",
    volume = "774",
    pages = "715--722",
    year = "2017"
}

@article{Kapec:2015vwa,
    author = "Kapec, Daniel and Lysov, Vyacheslav and Pasterski, Sabrina and Strominger, Andrew",
    title = "{Higher-dimensional supertranslations and Weinberg{\textquoteright}s soft graviton theorem}",
    eprint = "1502.07644",
    archivePrefix = "arXiv",
    primaryClass = "gr-qc",
    reportNumber = "CALT-TH-2015-006",
    doi = "10.4310/AMSA.2017.v2.n1.a2",
    journal = "Ann. Math. Sci. Appl.",
    volume = "02",
    pages = "69--94",
    year = "2017"
}

@article{Branson,
  title={Sharp inequalities, the functional determinant, and the complementary series},
  author={Branson, Thomas P},
  journal={Trans. Am. Math. Soc.},
  volume={347},
  number={10},
  pages={3671--3742},
  year={1995}
}

@article{GJMS,
author = {Graham, C. Robin and Jenne, Ralph and Mason, Lionel J. and Sparling, George A. J.},
title = {Conformally Invariant Powers of the Laplacian, I: Existence},
journal = {J. Lond. Math. Soc. },
volume = {s2-46},
number = {3},
pages = {557-565},
doi = {https://doi.org/10.1112/jlms/s2-46.3.557},
year = {1992}
}

@article{PhysRevLett.15.76,
  title = {Quantized Gravitational Theory and Internal Symmetries},
  author = {Komar, Arthur},
  journal = {Phys. Rev. Lett.},
  volume = {15},
  issue = {2},
  pages = {76--78},
  numpages = {0},
  year = {1965},
  publisher = {American Physical Society},
  doi = {10.1103/PhysRevLett.15.76},
}

@article{Sun:2021thf,
    author = "Sun, Zimo",
    title = "{A note on the representations of SO(1,d + 1)}",
    eprint = "2111.04591",
    archivePrefix = "arXiv",
    primaryClass = "hep-th",
    doi = "10.1142/S0129055X24300073",
    journal = "Rev. Math. Phys.",
    volume = "37",
    number = "01",
    pages = "2430007",
    year = "2025"
}

@book{Todorov:1978rf,
    author = "Todorov, I. T. and Mintchev, M. C. and Petkova, V. B.",
    title = "{Conformal Invariance in Quantum Field Theory}",
    publisher = "Scuola Normale Superiore",
    year = "1978"
}

@article{Basile:2016aen,
    author = "Basile, Thomas and Bekaert, Xavier and Boulanger, Nicolas",
    title = "{Mixed-symmetry fields in de Sitter space: a group theoretical glance}",
    eprint = "1612.08166",
    archivePrefix = "arXiv",
    primaryClass = "hep-th",
    doi = "10.1007/JHEP05(2017)081",
    journal = "JHEP",
    volume = "05",
    pages = "081",
    year = "2017"
}

@article{Bekaert:2006py,
    author = "Bekaert, Xavier and Boulanger, Nicolas",
    title = "{The unitary representations of the Poincar{\textbackslash}'e group in any spacetime dimension}",
    eprint = "hep-th/0611263",
    archivePrefix = "arXiv",
    doi = "10.21468/SciPostPhysLectNotes.30",
    journal = "SciPost Phys. Lect. Notes",
    volume = "30",
    pages = "1",
    year = "2021"
}

@article{Figueroa-OFarrill:2017tcy,
    author = "Figueroa-O'Farrill, Jos{\'e} M.",
    title = "{Higher-dimensional kinematical Lie algebras via deformation theory}",
    eprint = "1711.07363",
    archivePrefix = "arXiv",
    primaryClass = "hep-th",
    reportNumber = "EMPG-17-12",
    doi = "10.1063/1.5016616",
    journal = "J. Math. Phys.",
    volume = "59",
    number = "6",
    pages = "061702",
    year = "2018"
}

@article{Sachs:1962zza,
    author = "Sachs, R.",
    title = "{Asymptotic symmetries in gravitational theory}",
    doi = "10.1103/PhysRev.128.2851",
    journal = "Phys. Rev.",
    volume = "128",
    pages = "2851--2864",
    year = "1962"
}

@article{Ashtekar:2014zsa,
    author = "Ashtekar, Abhay",
    title = "{Geometry and physics of null infinity}",
    eprint = "1409.1800",
    archivePrefix = "arXiv",
    primaryClass = "gr-qc",
    reportNumber = "IGC-14-9-1",
    doi = "10.4310/sdg.2015.v20.n1.a5",
    journal = "Surveys Diff. Geom.",
    volume = "20",
    number = "1",
    pages = "99--122",
    year = "2015"
}

@article{Bekaert:2024itn,
    author = "Bekaert, Xavier and Campoleoni, Andrea and Pekar, Simon",
    title = "{Holographic Carrollian conformal scalars}",
    eprint = "2404.02533",
    archivePrefix = "arXiv",
    primaryClass = "hep-th",
    doi = "10.1007/JHEP05(2024)242",
    journal = "JHEP",
    volume = "05",
    pages = "242",
    year = "2024"
}

@article{Ashtekar:1981sf,
    author = "Ashtekar, A.",
    title = "{Radiative Degrees of Freedom of the Gravitational Field in Exact General Relativity}",
    doi = "10.1063/1.525169",
    journal = "J. Math. Phys.",
    volume = "22",
    pages = "2885--2895",
    year = "1981"
}

@book{Ashtekar:1987tt,
    author = "Ashtekar, A.",
    title = "{Asymptotic Quantization: Based on 1984 Naples Lectures,}",
    year = "1987",
    publisher = "Bibliopolis"
}

@article{Duval:2014uva,
    author = "Duval, C. and Gibbons, G. W. and Horvathy, P. A.",
    title = "{Conformal Carroll groups and BMS symmetry}",
    eprint = "1402.5894",
    archivePrefix = "arXiv",
    primaryClass = "gr-qc",
    doi = "10.1088/0264-9381/31/9/092001",
    journal = "Class. Quant. Grav.",
    volume = "31",
    pages = "092001",
    year = "2014"
}

@article{Duval:2014lpa,
    author = "Duval, C. and Gibbons, G. W. and Horvathy, P. A.",
    title = "{Conformal Carroll groups}",
    eprint = "1403.4213",
    archivePrefix = "arXiv",
    primaryClass = "hep-th",
    doi = "10.1088/1751-8113/47/33/335204",
    journal = "J. Phys. A",
    volume = "47",
    number = "33",
    pages = "335204",
    year = "2014"
}

@article{Bacry:1968zf,
    author = "Bacry, H. and Levy-Leblond, J.",
    title = "{Possible kinematics}",
    doi = "10.1063/1.1664490",
    journal = "J. Math. Phys.",
    volume = "9",
    pages = "1605--1614",
    year = "1968"
}

@article{SenGupta:1966qer,
    author = "Sen Gupta, N. D.",
    title = "{On an analogue of the Galilei group}",
    doi = "10.1007/BF02740871",
    journal = "Nuovo Cim. A",
    volume = "44",
    number = "2",
    pages = "512--517",
    year = "1966"
}

@article{Levy-Leblond:1965dsc,
    author = "L{\'e}vy-Leblond, Jean-Marc",
    title = "{Une nouvelle limite non-relativiste du groupe de Poincar{\'e}}",
    journal = "Ann. Inst. H. Poincar\'e Phys. Theor. A",
    volume = "3",
    number = "1",
    pages = "1--12",
    year = "1965"
}

@article{Eastwood:1991,
author = {Michael G. Eastwood and C. Robin Graham},
title = {{Invariants of conformal densities}},
volume = {63},
journal = {Duke Math. J.},
number = {3},
publisher = {Duke University Press},
pages = {633 -- 671},
year = {1991},
doi = {10.1215/S0012-7094-91-06327-1},
}

@book{Weinberg:1995mt,
    author = "Weinberg, Steven",
    title = "{The quantum theory of fields. Vol. 1: Foundations}",
    doi = "10.1017/CBO9781139644167",
    isbn = "978-0-521-67053-1, 978-0-511-25204-4",
    publisher = "Cambridge University Press",
    month = "6",
    year = "2005"
}

@article{Hollands:2016oma,
    author = "Hollands, Stefan and Ishibashi, Akihiro and Wald, Robert M.",
    title = "{BMS supertranslations and memory in four and higher dimensions}",
    eprint = "1612.03290",
    archivePrefix = "arXiv",
    primaryClass = "gr-qc",
    doi = "10.1088/1361-6382/aa777a",
    journal = "Class. Quant. Grav.",
    volume = "34",
    number = "15",
    pages = "155005",
    year = "2017"
}

@article{Garfinkle:2017fre,
    author = "Garfinkle, David and Hollands, Stefan and Ishibashi, Akihiro and Tolish, Alexander and Wald, Robert M.",
    title = "{The memory effect for particle scattering in even spacetime dimensions}",
    eprint = "1702.00095",
    archivePrefix = "arXiv",
    primaryClass = "gr-qc",
    doi = "10.1088/1361-6382/aa777b",
    journal = "Class. Quant. Grav.",
    volume = "34",
    number = "14",
    pages = "145015",
    year = "2017"
}

@article{Bekaert:2024uuy,
    author = "Bekaert, Xavier and Donnay, Laura and Herfray, Yannick",
    title = "{BMS particles}",
    eprint = "2412.06002",
    archivePrefix = "arXiv",
    primaryClass = "hep-th",
    doi = "10.1103/8376-fync",
    journal = "Phys. Rev. Lett.",
    volume = "135",
    number = "13",
    pages = "131602",
    year = "2025"
}

@article{Bekaert:2025kjb,
    author = "Bekaert, Xavier and Herfray, Yannick",
    title = "{BMS Representations for Generic Supermomentum}",
    eprint = "2505.05368",
    archivePrefix = "arXiv",
    primaryClass = "hep-th",
    doi = "10.1007/s00220-025-05513-0",
    journal = "Commun. Math. Phys.",
    volume = "407",
    number = "2",
    pages = "35",
    year = "2026"
}

@article{Girardello:1974sq,
    author = "Girardello, L. and Parravicini, G.",
    title = "{Continuous spins in the {B}ondi-{M}etzner-{S}achs group of asymptotic symmetry in general relativity}",
    doi = "10.1103/PhysRevLett.32.565",
    journal = "Phys. Rev. Lett.",
    volume = "32",
    pages = "565--568",
    year = "1974"
}

@article{McCarthy_72-I,
 ISSN = {00804630},
 author = {P. J. McCarthy},
 journal = {Proc. R. Soc. (London) A},
 number = {1583},
 pages = {517--535},
 publisher = {The Royal Society},
 title = {Representations of the {B}ondi-{M}etzner-{S}achs Group. I. Determination of the Representations},
 volume = {330},
 year = {1972}
}

@article{Mccarthy:1972ry,
    author = "McCarthy, P. J. M.",
    title = "{Asymptotically flat space-times and elementary particles}",
    doi = "10.1103/PhysRevLett.29.817",
    journal = "Phys. Rev. Lett.",
    volume = "29",
    pages = "817--819",
    year = "1972"
}

@article{McCarthy_73-II,
 ISSN = {00804630},
 author = {P. J. McCarthy},
 journal = {Proc. R. Soc. (London) A},
 number = {1594},
 pages = {317--336},
 publisher = {The Royal Society},
 title = {Representations of the {B}ondi-{M}etzner-{S}achs Group. II. Properties and Classification of the Representations},
 volume = {333},
 year = {1973}
}

@article{McCarthy_73-III,
 ISSN = {00804630},
 author = {P. J. McCarthy and M. Crampin},
 journal = {Proc. R. Soc. (London) A},
 number = {1602},
 pages = {301--311},
 publisher = {The Royal Society},
 title = {Representations of the {B}ondi-{M}etzner-{S}achs group. III. {P}oincar\'e Spin multiplicities and irreducibility},
 volume = {335},
 year = {1973}
}

@article{McCarthy_76-IV,
 ISSN = {00804630},
 author = {M. Crampin and P. J. McCarthy},
 journal = {Proc. R. Soc. (London) A},
 number = {1664},
 pages = {55--70},
 publisher = {The Royal Society},
 title = {Representations of the {B}ondi-{M}etzner-{S}achs group. IV. {C}antoni representations are induced},
 volume = {351},
 year = {1976}
}

@article{McCarthy_75,
 ISSN = {00804630},
  author = {P. J. McCarthy},
 journal = {Proc. R. Soc. (London) A},
 number = {1635},
 pages = {489--523},
 publisher = {The Royal Society},
 title = {The {B}ondi-{M}etzner-{S}achs group in the nuclear topology},
 volume = {343},
 year = {1975}
}

@article{McCarthy_78,
 author = {P. J. McCarthy},
 journal = {Proc. R. Soc. (London) A},
 pages = {141–171},
 publisher = {The Royal Society},
 title = {Lifting of projective representations of the {B}ondi-{M}etzner-{S}achs group},
 volume = {358},
 year = {1978}
}

@article{McCarthy_78errata,
 author = {P. J. McCarthy},
 journal = {Proc. R. Soc. (London) A},
number = {1695},
 pages = {495-498},
 publisher = {The Royal Society},
 title = {Hyperfunctions and asymptotic symmetries},
 volume = {358},
 year = {1978}
}

@article{McCarthy:1974aw,
    author = "Crampin, M. and McCarthy, P. J.",
    title = "{Physical significance of the topology of the {B}ondi-{M}etzner-{S}achs group}",
    doi = "10.1103/PhysRevLett.33.547",
    journal = "Phys. Rev. Lett.",
    volume = "33",
    pages = "547--550",
    year = "1974"
}

@article{Newman:1966ub,
    author = "Newman, E. T. and Penrose, R.",
    title = "{Note on the {B}ondi-{M}etzner-{S}achs group}",
    doi = "10.1063/1.1931221",
    journal = "J. Math. Phys.",
    volume = "7",
    pages = "863--870",
    year = "1966"
}

@book{Strominger:2017zoo,
    author = "Strominger, Andrew",
    title = "{Lectures on the Infrared Structure of Gravity and Gauge Theory}",
publisher = {Princeton University Press},
    eprint = "1703.05448",
    archivePrefix = "arXiv",
    primaryClass = "hep-th",
    isbn = "978-0-691-17973-5",
    year = "2017"
}

@article{Strominger:2013jfa,
    author = "Strominger, Andrew",
    title = "{On BMS Invariance of Gravitational Scattering}",
    eprint = "1312.2229",
    archivePrefix = "arXiv",
    primaryClass = "hep-th",
    doi = "10.1007/JHEP07(2014)152",
    journal = "JHEP",
    volume = "07",
    pages = "152",
    year = "2014"
}

@article{He:2014laa,
    author = "He, Temple and Lysov, Vyacheslav and Mitra, Prahar and Strominger, Andrew",
    title = "{BMS supertranslations and Weinberg\textquoteright{}s soft graviton theorem}",
    eprint = "1401.7026",
    archivePrefix = "arXiv",
    primaryClass = "hep-th",
    doi = "10.1007/JHEP05(2015)151",
    journal = "JHEP",
    volume = "05",
    pages = "151",
    year = "2015"
}

@article{Bekaert:2022ipg,
    author = "Bekaert, Xavier and Oblak, Blagoje",
    title = "{Massless scalars and higher-spin BMS in any dimension}",
    eprint = "2209.02253",
    archivePrefix = "arXiv",
    primaryClass = "hep-th",
    doi = "10.1007/JHEP11(2022)022",
    journal = "JHEP",
    volume = "11",
    pages = "022",
    year = "2022"
}

@article{Longhi:1997zt,
	archiveprefix = {arXiv},
	author = {Longhi, G. and Materassi, M.},
	doi = {10.1063/1.532782},
	eprint = {hep-th/9803128},
	journal = {J. Math. Phys.},
	pages = {480--500},
	reportnumber = {DFF-301-03-98},
	title = {{A Canonical realization of the BMS algebra}},
	volume = {40},
	year = {1999}}

@article{Campiglia:2015kxa,
	archiveprefix = {arXiv},
	author = {Campiglia, Miguel and Laddha, Alok},
	doi = {10.1007/JHEP12(2015)094},
	eprint = {1509.01406},
	journal = {JHEP},
	pages = {094},
	primaryclass = {hep-th},
	title = {{Asymptotic symmetries of gravity and soft theorems for massive particles}},
	volume = {12},
	year = {2015}}

@article{bondi_gravitational_1962,
	author = {Bondi, Hermann and {Van der Burg}, M. G. J. and Metzner, A. W. K.},
	doi = {10.1098/rspa.1962.0161},
	journal = {Proc. R. Soc. (London) A},
	number = {1336},
	pages = {21--52},
	title = {Gravitational Waves in General Relativity, {{VII}}. {{Waves}} from Axi-Symmetric Isolated System},
	volume = {269},
	year = {1962}}

@Incollection{Geroch1977,
author="Geroch, Robert",
editor="Esposito, F. Paul
and Witten, Louis",
title="Asymptotic Structure of Space-Time",
bookTitle="Asymptotic Structure of Space-Time",
year="1977",
publisher="Springer US",
address="Boston, MA",
pages="1--105",
isbn="978-1-4684-2343-3",
doi="10.1007/978-1-4684-2343-3_1",
}

@article{Bagchi:2025vri,
    author = "Bagchi, Arjun and Banerjee, Aritra and Dhivakar, Prateksh and Mondal, Saikat and Shukla, Ashish",
    title = "{The Carrollian kaleidoscope}",
    eprint = "2506.16164",
    archivePrefix = "arXiv",
    primaryClass = "hep-th",
    month = "6",
    year = "2025"
}

@article{Chatterjee:2017zeb,
    author = "Chatterjee, Atreya and Lowe, David A.",
    title = "{BMS symmetry, soft particles and memory}",
    eprint = "1712.03211",
    archivePrefix = "arXiv",
    primaryClass = "hep-th",
    reportNumber = "BROWN-HET-1725",
    doi = "10.1088/1361-6382/aab5cc",
    journal = "Class. Quant. Grav.",
    volume = "35",
    number = "9",
    pages = "094001",
    year = "2018"
}

@article{Cantoni:1966,
  title={A Class of Representations of the Generalized {B}ondi-{M}etzner Group},
  author={Vittorio Cantoni},
  journal={J. Math. Phys.},
  year={1966},
  volume={7},
  pages={1361-1364},
}

@article{Barnich:2014kra,
    author = "Barnich, Glenn and Oblak, Blagoje",
    title = "{Notes on the BMS group in three dimensions: I. Induced representations}",
    eprint = "1403.5803",
    archivePrefix = "arXiv",
    primaryClass = "hep-th",
    doi = "10.1007/JHEP06(2014)129",
    journal = "JHEP",
    volume = "06",
    pages = "129",
    year = "2014"
}

@article{Barnich:2015uva,
    author = "Barnich, Glenn and Oblak, Blagoje",
    title = "{Notes on the BMS group in three dimensions: II. Coadjoint representation}",
    eprint = "1502.00010",
    archivePrefix = "arXiv",
    primaryClass = "hep-th",
    doi = "10.1007/JHEP03(2015)033",
    journal = "JHEP",
    volume = "03",
    pages = "033",
    year = "2015"
}

@phdthesis{Oblak:2016eij,
    author = "Oblak, Blagoje",
    title = "{BMS particles in three dimensions}",
    eprint = "1610.08526",
    archivePrefix = "arXiv",
    primaryClass = "hep-th",
    doi = "10.1007/978-3-319-61878-4",
    school = "U. Brussels, Brussels U.",
    year = "2016"
}

@article{Borthwick:2024skd,
    author = {Borthwick, Jack and Chantreau, Ma{\"e}l and Herfray, Yannick},
    title = "{Ti and Spi, Carrollian extended boundaries at timelike and spatial infinity}",
    eprint = "2412.15996",
    archivePrefix = "arXiv",
    primaryClass = "gr-qc",
    doi = "10.1088/1361-6382/ae0d3f",
    journal = "Class. Quant. Grav.",
    volume = "42",
    number = "20",
    pages = "205012",
    year = "2025"
}

@article{Melas:2017jzb,
    author = "Melas, Evangelos",
    title = "{On the representation theory of the Bondi\textendash{}Metzner\textendash{}Sachs group and its variants in three space\textendash{}time dimensions}",
    eprint = "1703.05980",
    archivePrefix = "arXiv",
    primaryClass = "math-ph",
    doi = "10.1063/1.4993198",
    journal = "J. Math. Phys.",
    volume = "58",
    number = "7",
    pages = "071705",
    year = "2017"
}

@article{Herfray:2020rvq,
    author = "Herfray, Yannick",
    title = "{Asymptotic shear and the intrinsic conformal geometry of null-infinity}",
    eprint = "2001.01281",
    archivePrefix = "arXiv",
    primaryClass = "gr-qc",
    doi = "10.1063/5.0003616",
    journal = "J. Math. Phys.",
    volume = "61",
    number = "7",
    pages = "072502",
    year = "2020"
}

@article{Herfray:2021xyp,
    author = "Herfray, Yannick",
    title = "{Tractor Geometry of Asymptotically Flat Spacetimes}",
    eprint = "2103.10405",
    archivePrefix = "arXiv",
    primaryClass = "gr-qc",
    doi = "10.1007/s00023-022-01174-0",
    journal = "Ann. H. Poincar\'e",
    volume = "23",
    number = "9",
    pages = "3265--3310",
    year = "2022"
}

@article{Herfray:2021qmp,
    author = "Herfray, Yannick",
    title = "{Carrollian manifolds and null infinity: a view from Cartan geometry}",
    eprint = "2112.09048",
    archivePrefix = "arXiv",
    primaryClass = "gr-qc",
    doi = "10.1088/1361-6382/ac635f",
    journal = "Class. Quant. Grav.",
    volume = "39",
    number = "21",
    pages = "215005",
    year = "2022"
}

@book{Humphreys1996,
    author = "Humphreys, J.F.",
    title = "{A course in group theory}",
    year = "1996",
    publisher = "Oxford University Press"
}

@book{Dieudonne1970,
    author = "Dieudonn\'e, J.",
    title = "{\'{E}l\'ements d'analyse, tome {III}}",
    year = "1970",
    publisher = "Gauthier-Villars"
}

@article{Ramaswamy1981,
    author = "{Ramaswamy, Sriram and Sen, Amitabha}",
    title = "{Dual‐mass in general relativity}",
    journal = "{J. Math. Phys.}",
    volume = "{22}",
    number = "{11}",
    pages = "{2612-2619}",
    year = "{1981}",    
    issn = "{0022-2488}",
    doi = "{10.1063/1.524839}",    
}

@article{Penrose:1962ij,
    author = "Penrose, Roger",
    title = "{Asymptotic properties of fields and space-times}",
    doi = "10.1103/PhysRevLett.10.66",
    journal = "Phys. Rev. Lett.",
    volume = "10",
    pages = "66--68",
    year = "1963"
}

@article{Dyson:1972sd,
    author = "Dyson, F. J.",
    title = "{Missed opportunities}",
    doi = "10.1090/S0002-9904-1972-12971-9",
    journal = "Bull. Am. Math. Soc.",
    volume = "78",
    pages = "635--639",
    year = "1972"
}

@article{Figueroa-OFarrill:2018ilb,
    author = "Figueroa-O'Farrill, Jos{\'e} and Prohazka, Stefan",
    title = "{Spatially isotropic homogeneous spacetimes}",
    eprint = "1809.01224",
    archivePrefix = "arXiv",
    primaryClass = "hep-th",
    reportNumber = "EMPG-18-01",
    doi = "10.1007/JHEP01(2019)229",
    journal = "JHEP",
    volume = "01",
    pages = "229",
    year = "2019"
}

@article{Morand:2018tke,
    author = "Morand, Kevin",
    title = "{Embedding Galilean and Carrollian geometries I. Gravitational waves}",
    eprint = "1811.12681",
    archivePrefix = "arXiv",
    primaryClass = "hep-th",
    doi = "10.1063/1.5130907",
    journal = "J. Math. Phys.",
    volume = "61",
    number = "8",
    pages = "082502",
    year = "2020"
}

@article{Bekaert:2015xua,
    author = "Bekaert, Xavier and Morand, Kevin",
    title = "{Connections and dynamical trajectories in generalised Newton-Cartan gravity II. An ambient perspective}",
    eprint = "1505.03739",
    archivePrefix = "arXiv",
    primaryClass = "hep-th",
    doi = "10.1063/1.5030328",
    journal = "J. Math. Phys.",
    volume = "59",
    number = "7",
    pages = "072503",
    year = "2018"
}

@article{Figueroa-OFarrill:2022nui,
    author = "Figueroa-O'Farrill, Jos{\'e}",
    title = "{Non-lorentzian spacetimes}",
    eprint = "2204.13609",
    archivePrefix = "arXiv",
    primaryClass = "math.DG",
    reportNumber = "EMPG-21-16",
    doi = "10.1016/j.difgeo.2022.101894",
    journal = "Differ. Geom. Appl.",
    volume = "82",
    pages = "101894",
    year = "2022"
}

@book{Alice2,
    author = "Carroll, L.",
    title = "{Through the Looking-Glass, and What Alice Found There}",
    year = "1871",
    publisher = "Macmillan \& Co"
}

@article{Duval:2014uoa,
    author = "Duval, C. and Gibbons, G. W. and Horvathy, P. A. and Zhang, P. M.",
    title = "{Carroll versus Newton and Galilei: two dual non-Einsteinian concepts of time}",
    eprint = "1402.0657",
    archivePrefix = "arXiv",
    primaryClass = "gr-qc",
    doi = "10.1088/0264-9381/31/8/085016",
    journal = "Class. Quant. Grav.",
    volume = "31",
    pages = "085016",
    year = "2014"
}

@article{Hartong:2015xda,
    author = "Hartong, Jelle",
    title = "{Gauging the Carroll Algebra and Ultra-Relativistic Gravity}",
    eprint = "1505.05011",
    archivePrefix = "arXiv",
    primaryClass = "hep-th",
    doi = "10.1007/JHEP08(2015)069",
    journal = "JHEP",
    volume = "08",
    pages = "069",
    year = "2015"
}

@article{Campoleoni:2023fug,
    author = "Campoleoni, Andrea and Delfante, Arnaud and Pekar, Simon and Petropoulos, P. Marios and Rivera-Betancour, David and Vilatte, Matthieu",
    title = "{Flat from anti de Sitter}",
    eprint = "2309.15182",
    archivePrefix = "arXiv",
    primaryClass = "hep-th",
    reportNumber = "CPHT-RR054.082023",
    doi = "10.1007/JHEP12(2023)078",
    journal = "JHEP",
    volume = "12",
    pages = "078",
    year = "2023"
}

@article{Newman:1976gc,
    author = "Newman, E. T.",
    title = "{Heaven and Its Properties}",
    doi = "10.1007/BF00762018",
    journal = "Gen. Rel. Grav.",
    volume = "7",
    pages = "107--111",
    year = "1976"
}

@article{Adamo:2009vu,
    author = "Adamo, T. M. and Kozameh, C. N. and Newman, E. T.",
    title = "{Null Geodesic Congruences, Asymptotically Flat Space-Times and Their Physical Interpretation}",
    eprint = "0906.2155",
    archivePrefix = "arXiv",
    primaryClass = "gr-qc",
    doi = "10.12942/lrr-2009-6",
    journal = "Living Rev. Rel.",
    volume = "12",
    pages = "6",
    year = "2009"
}

@article{Figueroa-OFarrill:2022mcy,
    author = "Figueroa-O'Farrill, Jos{\'e} and Have, Emil and Prohazka, Stefan and Salzer, Jakob",
    title = "{The gauging procedure and carrollian gravity}",
    eprint = "2206.14178",
    archivePrefix = "arXiv",
    primaryClass = "hep-th",
    reportNumber = "EMPG-22-10",
    doi = "10.1007/JHEP09(2022)243",
    journal = "JHEP",
    volume = "09",
    pages = "243",
    year = "2022"
}

@article{Campoleoni:2022ebj,
    author = "Campoleoni, Andrea and Henneaux, Marc and Pekar, Simon and P{\'e}rez, Alfredo and Salgado-Rebolledo, Patricio",
    title = "{Magnetic Carrollian gravity from the Carroll algebra}",
    eprint = "2207.14167",
    archivePrefix = "arXiv",
    primaryClass = "hep-th",
    doi = "10.1007/JHEP09(2022)127",
    journal = "JHEP",
    volume = "09",
    pages = "127",
    year = "2022"
}

@article{Bergshoeff:2022eog,
    author = "Bergshoeff, Eric and Figueroa-O'Farrill, Jos{\'e} and Gomis, Joaquim",
    title = "{A non-lorentzian primer}",
    eprint = "2206.12177",
    archivePrefix = "arXiv",
    primaryClass = "hep-th",
    reportNumber = "EMPG-22-08",
    doi = "10.21468/SciPostPhysLectNotes.69",
    journal = "SciPost Phys. Lect. Notes",
    volume = "69",
    pages = "1",
    year = "2023"
}

@article{Ruzziconi:2026bix,
    author = "Ruzziconi, Romain",
    title = "{Carrollian Physics and Holography}",
    eprint = "2602.02644",
    archivePrefix = "arXiv",
    primaryClass = "hep-th",
    month = "2",
    year = "2026"
}

@article{Ciambelli:2025unn,
    author = "Ciambelli, Luca and Jai-akson, Puttarak",
    title = "{Foundations of Carrollian Geometry}",
    eprint = "2510.21651",
    archivePrefix = "arXiv",
    primaryClass = "hep-th",
    reportNumber = "RIKEN-iTHEMS-Report-25",
    month = "10",
    year = "2025"
}

@article{Nguyen:2025zhg,
    author = "Nguyen, Kevin",
    title = "{Lectures on Carrollian Holography}",
    eprint = "2511.10162",
    archivePrefix = "arXiv",
    primaryClass = "hep-th",
    month = "11",
    year = "2025"
}

\end{document}